\documentclass[11pt]{amsart}
\usepackage{fullpage}
\usepackage[foot]{amsaddr}
\usepackage{amsmath}
\usepackage{amssymb}
\usepackage{amsthm}

\makeatletter
\@ifundefined{newcounteralias}{\usepackage{aliascnt}}{}
\makeatother

\usepackage[linesnumbered,boxed,ruled,vlined]{algorithm2e}
\usepackage{algpseudocode}
\usepackage{enumitem}
\usepackage{xifthen}
\usepackage{xspace}

\usepackage[margin=1cm]{caption} 
\usepackage{subfig}

\usepackage[thinlines]{easytable}

\usepackage[bookmarks=true,hypertexnames=false,pagebackref]{hyperref}
\hypersetup{colorlinks=true, citecolor=blue, linkcolor=red, urlcolor=blue}

\usepackage{pgfplots}
\pgfplotsset{compat=1.16}
\usepackage{tikz}
\usetikzlibrary{arrows,arrows.meta,backgrounds,calc,fit,decorations.pathreplacing,decorations.markings,shapes.geometric}

\tikzstyle{internal} = [draw, fill, shape=circle]
\tikzstyle{external} = [shape=circle]
\tikzstyle{square}   = [draw, fill, rectangle]
\tikzstyle{triangle} = [draw, fill, regular polygon, regular polygon sides=3, inner sep=3pt]
\tikzstyle{pentagon} = [draw, fill, regular polygon, regular polygon sides=5, inner sep=2pt, minimum size=14pt]
\tikzset{every fit/.append style=text badly centered}

\usetikzlibrary{positioning,chains,fit,shapes,calc}
\usetikzlibrary{trees}
\usetikzlibrary{decorations.pathreplacing}
\usetikzlibrary{decorations.pathmorphing}
\usetikzlibrary{decorations.markings}
\tikzset{>=latex} 

\usepackage{ifthen}

\usepackage{cleveref}

\usepackage[textsize=tiny]{todonotes}

\usepackage[normalem]{ulem}

\usepackage{mleftright}

\usepackage{cool}
\Style{DSymb={\mathrm d},DShorten=true,IntegrateDifferentialDSymb=\mathrm{d}}

\newcommand{\tp}[1]{{\left( #1 \right)}}
\newcommand{\sqtp}[1]{{\left[ #1 \right]}}

\newcommand{\Ex}{\mathop{\mathbb{{}E}}\nolimits}
\renewcommand{\Pr}{\mathop{\mathrm{Pr}}\nolimits}

\def\*#1{\mathbf{#1}}
\def\+#1{\mathcal{#1}}
\def\-#1{\mathrm{#1}}
\def\=#1{\mathbb{#1}}
\def\^#1{\mathbb{#1}}

\newcommand{\norm}[2]{\ensuremath{\left\Vert #2 \right\Vert_{#1}}}
\newcommand{\abs}[1]{\ensuremath{\left\vert#1\right\vert}}
\newcommand{\inner}[2]{\ensuremath{\left\langle #2 \right\rangle}_{#1}}

\newcommand{\set}[1]{\left\{#1\right\}}
\newcommand{\eps}{\varepsilon}

\newcommand{\Var}[2]{\ensuremath{\textnormal{Var}_{#1}\left(#2\right)}}
\newcommand{\mixingtime}[1]{\ensuremath{t_{\textnormal{mix}}(#1)}}
\newcommand{\pimin}{\ensuremath{\pi_{\textnormal{min}}}}

\newcommand{\defeq}{:=}

\newcommand{\numP}{\#{\textnormal{\textbf{P}}}}
\newcommand{\NP}{\textnormal{\textbf{NP}}}
\newcommand{\RP}{\textnormal{\textbf{RP}}}

\newcommand{\DTV}[2]{\-D_{\mathrm{TV}}\left({#1},{#2}\right)}

\newtheorem{theorem}{Theorem}

\makeatletter
\@ifundefined{newcounteralias}{%
  \newcommand{\newsharedtheorem}[2]{%
    \newaliascnt{#1}{theorem}%
    \newtheorem{#1}[#1]{#2}%
    \aliascntresetthe{#1}%
  }%
}{%
  \newcommand{\newsharedtheorem}[2]{\newtheorem{#1}[theorem]{#2}}%
}
\makeatother

\newsharedtheorem{conjecture}{Conjecture}
\newsharedtheorem{lemma}{Lemma}
\newsharedtheorem{claim}{Claim}
\newsharedtheorem{observation}{Observation}
\newsharedtheorem{proposition}{Proposition}
\newsharedtheorem{corollary}{Corollary}
\theoremstyle{definition}
\newsharedtheorem{condition}{Condition}
\newsharedtheorem{definition}{Definition}

\theoremstyle{remark}
\newtheorem{remark}{Remark}

\crefname{theorem}{Theorem}{Theorems}
\crefname{conjecture}{Conjecture}{Conjectures}
\crefname{observation}{Observation}{Observations}
\crefname{claim}{Claim}{Claims}
\crefname{condition}{Condition}{Conditions}
\crefname{algorithm}{Algorithm}{Algorithms}
\crefname{property}{Property}{Properties}
\crefname{example}{Example}{Examples}
\crefname{fact}{Fact}{Facts}
\crefname{lemma}{Lemma}{Lemmas}
\crefname{corollary}{Corollary}{Corollaries}
\crefname{definition}{Definition}{Definitions}
\crefname{remark}{Remark}{Remarks}
\crefname{proposition}{Proposition}{Propositions}
\crefname{equation}{equation}{equations}
\crefname{enumi}{Case}{Case}
\creflabelformat{enumi}{(#2#1#3)}

\makeatletter
\def\prob#1#2#3{\goodbreak\begin{list}{}{\labelwidth\z@ \itemindent-\leftmargin
      \itemsep\z@  \topsep6\p@\@plus6\p@
      \let\makelabel\descriptionlabel}
  \item[\textbf{Name}]#1
  \item[\textbf{Instance}]#2
  \item[\textbf{Output}]#3
  \end{list}}
\makeatother

\makeatletter
\providecommand\@dotsep{5}
\def\listtodoname{Todo list}
\def\listoftodos{\@starttoc{tdo}\listtodoname}
\makeatother

\newcommand{\ZHC}{Z_{\text{HC}}}

\usepackage{nicefrac,comment}

\title{Approximating spin systems on planar graphs}
\author{Heng Guo}
\author{Xinyuan Zhang}
\address[Heng Guo]{School of Informatics, University of Edinburgh, Informatics Forum, Edinburgh, EH8 9AB, United Kingdom}
\address[Xinyuan Zhang]{State Key Laboratory for Novel Software Technology, New Cornerstone Science Laboratory, Nanjing University, 163 Xianlin Avenue, Nanjing, China}
\begin{document}

\begin{abstract}
  We show that the hard-core partition function admits a fully polynomial-time randomised approximation scheme (FPRAS) on planar graphs when the activity is a sufficiently small constant.
  In contrast, we show that for any constant $q\ge 4$, approximately counting $q$-colourings in planar graphs is \NP{}-hard.
  We also give a complete characterisation of when an FPRAS exists for a sufficiently small external field for 2-spin systems on planar graphs.
  The main ideas of all proofs were found using GPT-5.6 Sol Ultra.
\end{abstract}

\maketitle

\section{Introduction}

Spin systems are models for nearest-neighbour interactions.
Originating in statistical physics, they are now studied across a number of areas including theoretical computer science.
The main computational task is to approximate their so-called partition function, or equivalently to sample from the associated Gibbs distribution.
A striking phenomenon is the computational phase transition.
It means that the computational complexity of these approximation tasks can change drastically, 
when the parameter of the system crosses certain critical thresholds.

The first model where this is rigorously established is the hard-core gas model.
Given a graph $G=(V,E)$, let $\+I(G)$ denote the set of independent sets of $G$.
Then, the hard-core partition function is the following:
\begin{align}
  \ZHC{}(G,\lambda)\defeq\sum_{I\in\+I(G)} \lambda^{\abs{I}},
\end{align}
where $\lambda>0$ is a parameter called the fugacity.
Suppose the graph has maximum degree $\Delta$.
Then, there is a critical threshold $\lambda_c=\lambda_c(\Delta)\approx\frac{e}{\Delta}$.
If $\lambda<\lambda_c$, then efficient algorithms exist to approximate $\ZHC$ \cite{Wei06,ALO24},
whereas if $\lambda>\lambda_c$, the problem becomes \NP{}-hard \cite{SS14,GSV16}.
More recently, even the case of $\lambda=\lambda_c$ is shown to admit an efficient algorithm \cite{CCYZ25}.

Planar graphs are an important class of graphs that has attracted particular attention in statistical physics.
An early success is the polynomial-time exact algorithm to compute the Ising partition function and the number of perfect matchings in planar graphs,
namely the FKT algorithm \cite{TF61,Kas61,Kas67}.
It turns out that this is more of an exception, and most partition functions on planar graphs remain \numP{}-hard to compute exactly \cite{GW20,CFS21,CFGW22,CFST26},
including all the problems we are going to study next in this paper.

On the approximation side, since planar graphs are always sparse, it is natural to ask if computational phase transitions similar to the bounded degree case hold in planar graphs as well.
However, planar graphs can embed trees of arbitrarily large maximum degree.
Thus, traditional algorithms do not apply.

Our first result is that, nonetheless, planar graphs admit efficient algorithms for the hard-core partition function when $\lambda$ is a sufficiently small constant.
\begin{theorem}  \label{thm:HC}
  There is an absolute constant $\lambda_0>0$ such that for any $\lambda<\lambda_0$, the following holds.
  There is an algorithm that takes as input a planar graph $G=(V,E)$ and $\eps>0$, and outputs a value $\widetilde Z$ such that $e^{-\eps}\le \frac{\widetilde Z}{Z}\le e^{\eps}$ with probability at least $3/4$.
  It runs in time polynomial in $n$ and $1/\eps$.
\end{theorem}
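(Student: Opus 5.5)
The plan is to exploit the fact that planar graphs are $5$-degenerate: they have at most $3n-6$ edges, so all but a bounded fraction of the vertices have small degree. High-degree vertices cannot be handled by correlation decay, so we treat them separately. Fix a threshold $D$, a large absolute constant. Let $H=\{v:\deg(v)> D\}$ be the set of high-degree vertices and $L=V\setminus H$. The key structural input is planarity: the bipartite graph between $H$ and any set of low-degree vertices is planar. Hence, for any $S\subseteq H$, the number of low-degree vertices with at least $3$ neighbours in $S$ is $O(|S|)$. This is the standard fact that $K_{3,3}$-freeness forces few vertices to have large common neighbourhoods.

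We then rewrite $\ZHC(G,\lambda)$ as a weighted sum over the configuration restricted to $H$:
\[
  \ZHC(G,\lambda)=\sum_{I_H\in \mathcal I(G[H])}\lambda^{|I_H|}\,\ZHC\bigl(G[L\setminus N(I_H)],\lambda\bigr).
\]
Here $G[L\setminus N(I_H)]$ has maximum degree at most $D$. For $\lambda<\lambda_c(D)\approx e/D$, each inner term can be approximated, and sampled from, in polynomial time by the bounded-degree results \cite{Wei06,ALO24}. We choose $\lambda_0$ small enough that $\lambda_0<\lambda_c(D)$.

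The outer sum is handled by Markov chain Monte Carlo on the joint distribution, with a block-update (Glauber-type) dynamics. At each step we pick a vertex $h\in H$ and resample its spin conditioned on the rest. The conditional probability that $h$ is occupied is
\[
  \frac{\lambda\,\mathbf 1[\text{no occupied neighbour}]}{1+\lambda}.
\]
This is always at most $\lambda$, regardless of $\deg(h)$, so high-degree vertices are almost always unoccupied. The real issue is the opposite direction: how strongly an occupied $h$, or conditioning on the low vertices, perturbs everything else.

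The intended route is spectral independence, or equivalently a path-coupling or influence bound. We bound the total influence that a pinning at one vertex exerts, and we weight influences by vertex degree, using a potential $\psi(v)$ that grows with $\deg(v)$. Influence from a high-degree vertex $h$ onto its neighbours is large in number, but it only matters when $h$ is occupied, which has probability at most $\lambda$. Conversely, the influence of a single neighbour on $h$ is tiny, because $h$ is already unoccupied with probability at least $1-\lambda$ and changing a neighbour changes this by at most $O(\lambda)$. So in a weighted $\ell_1$ sense the influence matrix has row sums $O(\lambda\cdot\text{(average degree)})$. Planarity, via bounded arboricity ($\le 3$), lets us orient edges so every vertex has out-degree $\le 3$, and we charge each edge to its tail. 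We aim for a weighted influence bound of the form $\sum_u \mathcal I(v\to u)\,w(u)/w(v)\le C\lambda$ with $w(v)=\sqrt{\deg v}$ or similar, which yields spectral independence with an $O(1)$ constant.

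The main obstacle is that spectral independence alone gives only $n^{O(1)}$ mixing when degrees are unbounded. Spectral independence yields optimal mixing only with bounded degree, or with marginal lower bounds together with entropy factorisation. Here marginals are bounded away from $0$ and $1$ for small $\lambda$, since every vertex is unoccupied with probability at least $1/(1+\lambda)$ and occupied with probability at least $\lambda/(1+\lambda)^{\deg}$. The last bound can be exponentially small, however, so the "bounded marginals" condition of the general theorems fails. We would first try to show that the $\ell_\infty$-spectral independence bound of $O(1)$, which is independent of degree, combined with the local-to-global theorem of Anari--Liu--Oveis Gharan, still gives a polynomial spectral gap for Glauber dynamics: the spectral-gap version needs no marginal bounds and gives mixing time $n^{O(\eta)}$. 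Given a polynomial-time sampler for all pinnings, which are again planar hard-core instances, the standard self-reducibility telescoping product converts sampling into an FPRAS with the claimed $e^{\pm\eps}$ accuracy and success probability $3/4$.
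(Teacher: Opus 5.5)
Your skeleton (uniform spectral independence under all pinnings, then the Anari--Liu--Oveis Gharan local-to-global theorem, then self-reducibility) could work in principle. The gap is that the step carrying all the difficulty is only announced (``we aim for''), and the heuristics you give for it fail. That step is a degree-independent $O(1)$ bound on $\lambda_{\max}(\Psi^\tau)$ for every pinning $\tau$ of a planar graph.

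First, the influence $\Psi(h,u)=\Pr[\sigma_u=1\mid\sigma_h=1]-\Pr[\sigma_u=1\mid\sigma_h=0]$ compares two pinnings of $h$. It is not discounted by $\Pr[\sigma_h=1]\le\lambda$. On a single star $K_{1,d}$ with centre $h$, every leaf $u$ has $\abs{\Psi(h,u)}=\lambda/(1+\lambda)$ and $\abs{\Psi(u,h)}=\lambda/(\lambda+(1+\lambda)^{d-1})$. So the row of $h$ sums to $d\lambda/(1+\lambda)$. With $w=\sqrt{\deg}$ the weighted row of $h$ is still $\lambda\sqrt d/(1+\lambda)$, unbounded in $d$. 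Worse, for $d\approx 1/\lambda$ the largest eigenvalue of $\Psi$ on this star is about $\sqrt{\lambda/e}$. Since any weighted row-sum bound dominates the spectral radius, no choice of weights can give $C\lambda$ with $C$ absolute.

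Second, $\Psi$ is not supported on edges; for example, the leaf--leaf entries of a star are nonzero. An orientation with out-degree at most $3$ therefore charges nothing to non-adjacent entries. Bounding those needs a correlation-decay argument over paths, and the branching of such paths is unbounded in planar graphs.

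Third, the first half of your proposal is never used. The split $V=H\cup L$, the $K_{3,3}$ count, and the chain resampling only vertices of $H$ do not connect to the spectral-independence argument, and that chain is neither specified on $L$ nor analysed.

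For what it is worth, the target is true, but only a posteriori. Take the bound $\operatorname{gap}(P)\ge 1/(2n)$ of \Cref{thm:glauber-gap}, which holds for every planar instance and hence every pinning, and test it on linear functions $\sum_i a_i\sigma_i$. This gives $\mathrm{Cov}\preceq 2\operatorname{diag}(\operatorname{Var}\sigma_i)$, hence $\lambda_{\max}(\Psi^\tau)\le 1$. Your proposal has no independent route to this.

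The missing idea is to measure interactions with a symmetric quantity, on a decomposition where every pairwise interaction is a star forest, so that unbounded degree costs nothing. The paper takes a $20$-colour star colouring (\Cref{prop:star-colouring}) and runs block dynamics on the colour classes. Under any feasible pinning outside two classes, those two classes induce a disjoint union of stars. The Hirschfeld--Gebelein--R\'enyi maximal correlation between the two blocks tensorises over the stars by~\eqref{eq:max-principle}. For $K_{1,d}$ it equals $\sqrt{\lambda((1+\lambda)^d-1)/(1+\lambda)^{d+1}}\le\sqrt{\lambda/(1+\lambda)}$, uniformly in $d$. \Cref{lem:dobrushin}, applied with blocks as coordinates, turns the resulting $20\times 20$ matrix into $\operatorname{gap}\ge 1/40$, since its row sums are at most $1/2$ when $\lambda\le 1/1443$. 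Block updates are product measures because each class is an independent set. No influence matrix, weighting, marginal bound, or path-wise correlation decay is needed.
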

In particular, taking $\lambda_0=\frac{1}{1443}$ suffices.
We did not optimise the constant $\lambda_0$ or the running time in favour of a simple presentation.
An algorithm satisfying the description in \Cref{thm:HC} is often called a fully polynomial-time randomised approximation scheme (FPRAS), and the probability $3/4$ is not important and can be easily amplified.

Previously, FPRASes are known through Glauber dynamics when the planar graph has maximum degree $\Delta$ and $\lambda=O(1/\sqrt{\Delta})$ \cite{Hay06,Eft26}.
In contrast, \Cref{thm:HC} does not assume any maximum degree requirement.
On the intractability side, Goldberg, Jerrum, and McQuillan \cite{GJM15} showed that for $\lambda>312$, the problem becomes \NP{}-hard, even if the maximum degree is $4$.
Together with \Cref{thm:HC}, this strongly suggests that there indeed is a computational phase transition for approximating the hard-core partition function on planar graphs.

In light of \Cref{thm:HC}, a natural guess is then the same phenomenon should happen for any spin system that undergoes a computational phase transition for bounded degree graphs.
For example, approximately counting proper colourings in graphs with maximum degree $\Delta$ is shown to be \NP{}-hard if $q$ is even and $q<\Delta$ \cite{GSV15},
and FPRASes exist for $q>1.809\Delta$ and $\Delta$ large enough \cite{CV25}, or $q>(11/6-\eps_0)\Delta$ for a small constant $\eps_0>0$ and any $\Delta\ge 3$ \cite{CDMPP19}.
Thus, one may guess that an FPRAS exists when the number of colours is sufficiently large for planar graphs.
This guess is further strengthened by a result of \cite{HVV15},
which states that for planar graphs with maximum degree $\Delta$, an FPRAS exists for $q=O(\Delta/\log\Delta)$, below the general graph hardness threshold.
However, our second result refutes this possibility.
\begin{theorem}  \label{thm:colouring}
  For any $q\ge 4$,
  if there is an FPRAS for the number of $q$-colourings in planar graphs,
  then \NP{}=\RP{}.
\end{theorem}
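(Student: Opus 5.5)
The plan is to reduce from an \NP{}-hard decision problem on planar graphs whose solutions are exponentially rarer than non-solutions, and to use an approximate counter to detect whether any solution exists. The cleanest source is planar $3$-colourability, which is \NP{}-complete even for planar graphs of maximum degree $4$. For $q\ge 4$, every planar graph is $q$-colourable, so we cannot hope to decide existence directly. Instead we will build, from a planar instance $H$, a planar graph $G$ in which $q$-colourings that ``use'' only a restricted palette on the gadget vertices encode proper $3$-colourings of $H$. We will also arrange that these colourings dominate the partition function by an exponential factor whenever $H$ is $3$-colourable. An FPRAS for $Z_q(G)$, combined with an FPRAS for a comparison graph $G'$ that is identical to $G$ except for the constraint, would then distinguish the two cases. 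That gives \NP{}$\subseteq$ BPP, and by self-reducibility of $3$-colouring \NP{}$=$\RP{}.

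Concretely, first handle $q=4$, then lift to general $q$ by induction. For the lift, join a new apex vertex adjacent to everything. We cannot do this globally, since it destroys planarity. Instead we use the standard trick of planar ``forcing'' via large wheels or stacked triangulations, which reduce the available palette locally by one colour. To reduce $q$ colours to $3$ at each vertex $v$ of $H$, we attach to $v$ a planar gadget, a large stack of triangulated blocks adjacent to $v$. In most of its colourings, this gadget forbids $q-3$ particular colours at $v$, consistently across vertices. Consistency can be achieved by threading a single planar ``backbone'' path of triangles through the faces of the embedding of $H$, so that all gadgets share a common colour frame. The gadget weights are chosen so that a vertex is encouraged to avoid the frame colours: replace each edge $uv$ of $H$ by many parallel paths of length two, which pushes $u,v$ to differ.

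The quantitative heart is amplification. Replace each vertex of $H$ by a triangle-based ``equality'' gadget, and each edge by $m=\mathrm{poly}(n)$ parallel disjoint length-$2$ paths $u$–$w$–$v$, which stay planar by the nesting of parallel paths. Each such path contributes a factor $q-1$ if $c(u)=c(v)$ and $q-2$ if $c(u)\neq c(v)$. The ratio of this weight to the uniform weight therefore penalises monochromatic edges by $\bigl(\frac{q-1}{q-2}\bigr)^{-m}$. Wait: here $q-1>q-2$, so equal endpoints are favoured, not penalised. So length-$2$ paths implement an \emph{attractive} (ferromagnetic) interaction, and we must use length-$3$ paths or their complements to get antiferromagnetic weights, at the cost of planarity bookkeeping. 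Computing these effective edge weights, and checking that some planar gadget realises a sufficiently strong antiferromagnetic $3$-state restriction, is routine linear algebra on the transfer matrix.

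The main obstacle is making the palette restriction planar and exponentially robust at the same time: ensuring all vertex gadgets agree on which $3$ colours are ``allowed'' without a global apex vertex. My first attempt is to reduce instead from a known \NP{}-hard approximation problem with a gap, e.g.\ the antiferromagnetic Potts model or hard-core at large fugacity on planar graphs (the hard-core case is known to be hard by \cite{GJM15} for $\lambda>312$, even at degree $4$). I would simulate hard-core with large $\lambda$ via a colouring gadget in which a vertex is ``occupied'' if it takes colour $1$. The gadget attaches to $v$ many pendant planar subgraphs whose colourings multiply by a factor $\lambda$ when $c(v)=1$. Adjacent occupied vertices are automatically forbidden by properness. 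The remaining colours $\{2,\dots,q\}$ must then behave like a single ``unoccupied'' state, which needs symmetry among them in each gadget. Pendant gadgets are planar and symmetric under permuting the non-$1$ colours only if they are anchored to a shared reference, which I would realise as one pendant triangle chain per vertex pinned by planar reflection. Verifying that the induced weight $\lambda$ exceeds $312$, with error controllable through an FPRAS to \NP{}-hardness, is the step I expect to be the hardest.
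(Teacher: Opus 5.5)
\textbf{There is a genuine gap: the proposal never reaches a working reduction.} Both routes you sketch depend on breaking the colour symmetry. The first needs the same $q-3$ colours to be forbidden at every vertex of $H$; the second needs a distinguished ``occupied'' colour $1$. Local gadgets cannot do this. For a gadget attached to $H$ at a single vertex $v$, composing with any permutation $\pi$ of $[q]$ is a bijection between its proper extensions with $c(v)=a$ and those with $c(v)=\pi(a)$. So a pendant gadget contributes the same factor for every colour of $v$. The pendant subgraphs ``multiplying by $\lambda$ when $c(v)=1$'' therefore do not exist, and anchoring one triangle chain per vertex does not change this. By the same argument, every two-terminal gadget only sees whether $c(u)=c(v)$. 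On edges you can realise Potts-type interactions and nothing more, never a ``$3$-state restriction''.

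Breaking the symmetry would therefore need a common reference structure reaching all of $V(H)$. You flag this as the main obstacle but do not resolve it, and the backbone threaded through the faces cannot work as stated. Suppose a connected frame, disjoint from the parts of $G$ representing $H$, were adjacent to every vertex of $H$. Contracting it would exhibit $H$ plus a universal vertex as a minor of the planar graph $G$. That forces $H$ to be outerplanar, and outerplanar graphs are always $3$-colourable. So the frame would have to pass through the edge gadgets, which means planar crossover gadgets for $q\ge 4$ colours; you do not provide these. Even granting a distinguished colour, the hard-core simulation fails: adjacent ``unoccupied'' vertices must still receive different colours from $\{2,\dots,q\}$, so ``unoccupied'' is not a single free state.

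\textbf{The missing idea is to reduce from an optimisation problem that is itself invariant under colour permutations,} so that no symmetry breaking is needed. The paper uses signed planar \textsc{$q$-cut}: maximise $\sum_{uv\in E,\,\sigma(u)=\sigma(v)} w(uv)$ over $\sigma\colon V\to[q]$, with $w\in\{-1,+1\}$. For $q\ge 4$ its optimum equals that of signed planar coalition structure generation, which is \NP{}-hard. To see this, take an optimal partition into connected coalitions, contract each coalition, and four-colour the planar quotient; the bound of $q$ colours then costs nothing.

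You had in fact already found the right gadgets:
\begin{itemize}
\item $s$ parallel length-$2$ paths on each positive edge give ferromagnetic weights $(q-1)^s$ (same colour) versus $(q-2)^s$ (different colours);
\item $t$ parallel length-$3$ paths on each negative edge give antiferromagnetic weights $((q-1)(q-2))^t$ versus $((q-1)+(q-2)^2)^t$.
\end{itemize}
Set $A=\frac{q-1}{q-2}$ and $B=\frac{(q-1)+(q-2)^2}{(q-1)(q-2)}$. Choose $s,t=\Theta(n)$ with $A^s\ge 100q^{100n}$ and $B^t/A^s\in[1,2]$. Then each $\sigma$ has weight $Z_0A^{s(\mathrm{score}(\sigma)+m_-)}$ up to a factor in $[1,2^{m_-}]$. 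Hence $Z$ lies within a factor $q^n2^{m_-}$ of $Z_0A^{s(K+m_-)}$, and a constant-factor approximation recovers the optimum $K$ exactly.

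The ferromagnetic behaviour of length-$2$ paths, which you treated as a nuisance, is exactly what is needed. Mixing attractive and repulsive edges is what keeps the planar problem hard despite the four colour theorem.
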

Deciding the existence of $3$-colourings in planar graphs is \NP{}-hard \cite{GJS76}.
Thus, the only tractable version for planar graphs is to count $2$-colourings.

\Cref{thm:colouring} is somewhat surprising, especially given the four colour theorem \cite{RSST97},
which states that a $4$-colouring always exists and can be found in polynomial-time for planar graphs.
On the other hand, Welsh \cite[Conjecture 8.7.5]{Wel93} conjectured the \NP{}-hardness for all $q\ge 4$, although he admitted that the conjecture may be rash.
\Cref{thm:colouring} confirms this rash conjecture.
The number of $q$-colourings is also an evaluation of the chromatic polynomial, which is a special case of the Tutte polynomial.
Thus, \Cref{thm:colouring} also resolves the open cases in \cite{GJ12}, which studies the approximation complexity of evaluating the Tutte polynomial in planar graphs.
Due to planar duality, the hardness in \Cref{thm:colouring} carries over to the flow polynomial, another specialisation of the Tutte polynomial, as well.

In view of \Cref{thm:HC,thm:colouring}, it is natural to wonder about the general principle for computational phase transition to happen in planar graphs.
We do not have a complete answer, but we worked out the case of 2-spin systems, whose partition function is defined as follows for a graph $G=(V,E)$:
\begin{align*}
    Z_{(\beta,\gamma,\lambda)}(G) \defeq \sum_{\sigma:V\to\{0,1\}} \prod_{(u,v)\in E} \beta^{\mathbf{1}\{\sigma(u)=\sigma(v)=0\}} \gamma^{\mathbf{1}\{\sigma(u)=\sigma(v)=1\}} \prod_{v\in V} \lambda^{\mathbf{1}\{\sigma(v)=1\}},
\end{align*}
where $\beta,\gamma,\lambda$ are three parameters.
The hard-core model is a special case with $\beta=1$, $\gamma=0$, and $\lambda>0$,
and the famous Ising model is a special case with $\beta=\gamma>0$.
On planar graphs and when $\lambda=1$, the Ising partition function can be computed exactly in polynomial time via the FKT algorithm \cite{TF61,Kas61,Kas67}.
Other than this exceptional case, due to symmetry, we may assume that $\beta\ge\gamma$, and we have the following result.

\begin{theorem}  \label{thm:2-spin}
  Let $\beta>0$, $\gamma\ge 0$, $\lambda>0$ be three parameters such that either $\gamma<\beta$ or $\gamma=\beta$ and $\lambda\neq 1$.
  Then, on planar graphs,
  \begin{enumerate}
    \item \label{item:2-spin-hard} if $\beta<1$, then for any $\lambda>0$, there is no FPRAS for $Z_{(\beta,\gamma,\lambda)}$ unless \NP{}=\RP{};
    \item \label{item:2-spin-easy} otherwise, there always exists an absolute constant $\lambda_0>0$ such that an FPRAS exists for $Z_{(\beta,\gamma,\lambda)}$ whenever $\lambda<\lambda_0$.
  \end{enumerate}
\end{theorem}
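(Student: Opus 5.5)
Both parts will be reduced to the two earlier theorems, using planar gadgets that keep the input planar.

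\emph{Hardness when $\beta<1$.} With $\beta\ge\gamma$ and $\beta<1$, both parameters are below $1$, so the system is antiferromagnetic in a strong sense: an edge whose endpoints agree on either spin is penalised. I will not work with $\lambda$ directly. Instead I will use planar gadgets (long paths, and vertices with many pendant paths attached) to realise effective external fields and effective edge interactions. The aim is to simulate a hard planar problem.

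The main route I would try is to implement a planar, antiferromagnetic Ising-like model with a nontrivial effective field. The field can be tuned arbitrarily, because a vertex with $k$ pendant neighbours has its field multiplied by a factor that is geometric in $k$.

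From there I would reduce from approximating a hard planar problem. Candidates are approximate maximum cut in planar graphs with a field, or, more safely, the hard-core model at large activity on bounded-degree planar graphs, which is \NP{}-hard by \cite{GJM15}. For the latter, I would choose gadgets so that the effective $\beta$ or $\gamma$ becomes tiny, mimicking hard constraints. Raising an interaction to a high power by replacing an edge with parallel paths of odd length drives the effective weights to the desired ratios, and path length and parallel copies together give tunable effective weights.

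The main obstacle is making these effective parameters land in a hard region while staying planar and bounded degree, uniformly for every $\lambda>0$. When $\gamma=\beta$ and $\lambda\ne1$, there is an additional check that the field survives the gadgets and does not cancel to the exactly solvable Ising point.

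\emph{Tractability when $\beta\ge1$.} Here I would mimic the proof of \Cref{thm:HC}. The idea is to view spin $1$ as a sparse ``defect'' set $S$ with weight $\lambda^{|S|}\gamma^{e(S)}\beta^{e(V\setminus S)}$. Dividing by $\beta^{|E|}$, this equals
\[
\lambda^{|S|}\,(\gamma/\beta)^{e(S)}\,\beta^{-|\partial S|-e(S)}.
\]
Since $\beta\ge1$ and $\gamma\le\beta$, every factor beyond $\lambda^{|S|}$ is at most $1$. So the weight is dominated by $\lambda^{|S|}$, which is exactly the monotone, subcritical polymer situation exploited for the hard-core model. In the case $\gamma=\beta$ with $\lambda>1$, the regime "$\lambda<\lambda_0$" forces $\lambda_0\le1$ anyway, so the only relevant case is small $\lambda$.

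I would then rerun the argument of \Cref{thm:HC} with these extra bounded-by-$1$ factors. Presumably that argument is a Markov chain, or a cluster expansion, that exploits the degeneracy of planar graphs (every subgraph has a vertex of degree at most $5$) rather than maximum degree.

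The key check is that the hard-core proof uses only this domination by $\lambda^{|S|}$ together with planar sparsity. If instead it uses the hard constraint itself, I would replace the hard constraint by the soft penalty $\beta^{-|\partial S|}$, which only helps, since high-degree vertices are even less likely to carry spin $1$.
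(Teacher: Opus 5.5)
Both halves of your proposal stop exactly where the paper's proof does its real work, so there are genuine gaps.

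\emph{Hardness.} The only field gadget you make explicit is a pendant neighbour. Each one multiplies the field at $v$ by $(1+\lambda\gamma)/(\beta+\lambda)$, whose direction is fixed by $(\beta,\gamma,\lambda)$ and which equals $1$ whenever $1+\lambda\gamma=\beta+\lambda$. After amplifying the edges, the base field $\lambda(\zeta_{1,0}/\zeta_{0,0})^{sd_v}$ depends on the degree and can be exponentially large or small. So you must be able to move the field in a prescribed direction, for every $\lambda>0$.

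That is the content of \Cref{lem:construct-gadget}, which your plan lacks:
if $K_2$ is unbiased, use $K_3$, and a short computation shows both are unbiased only when $\beta=\gamma$ and $\lambda=1$ (this settles your ``additional check'' for $\gamma=\beta$);
then glue $t$ copies at a vertex $u$ and hang a new root off $u$; by \eqref{eq:subtraction} the bias flips for large $t$.
This flip is exactly where $\beta<1$ and $\gamma<1$ are used. For $\beta\ge1>\gamma$, pinnings can only lower the field, which is why that regime is easy.

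Your source problem is also ill-suited. The hardness in \cite{GJM15} is for a fixed uniform activity, whereas gadgets only place each field in a window $[L,UL]$, so those instances cannot be simulated faithfully. The paper reduces from \textsc{Planar-MIS} instead:
$s=\Theta(n^4)$ parallel length-$3$ paths (your odd-path amplification) give $\beta''=1$ and $\gamma''\le e^{-n^4}$ after moving the $(0,0)$ weight into the fields;
the gadgets put every $\lambda'_v$ in $[e^{n^2},Ue^{n^2}]$, where the upper bound keeps non-independent sets negligible;
the $U^n$ slack is swamped by $e^{n^2}$, so $L_{K+1}>R_K$ reveals the maximum independent set size.

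\emph{Tractability.} What the hard-core argument needs is not the global bound $w(S)\le\beta^{\abs{E}}\lambda^{\abs{S}}$ but a bound under arbitrary pinnings. The proof runs block dynamics on the classes of a $20$-colour star colouring. It bounds the HGR correlation between two classes conditioned on all others (a law on a star forest) and feeds this into \Cref{lem:dobrushin}.

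A vertex with $a_v$ neighbours pinned to $0$ and $b_v$ pinned to $1$ has activity $\lambda\beta^{-a_v}\gamma^{b_v}$, which is at most $\lambda$ for all pinnings only if $\gamma\le1$. Your regime $\beta\ge1$ contains ferromagnetic points such as $\beta=\gamma=2$ with small $\lambda\neq1$. There your global domination holds, but a star centre with $b$ neighbours pinned to $1$ has activity $\lambda 2^{b}$, so the pinned bound is lost. The paper separates $\beta\gamma>1$ (handled by \cite{GJP03}) and $\beta\gamma=1$ (factorisation), and uses block dynamics only for $\beta\ge1>\gamma$.

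Even there, ``the soft penalty only helps'' must be proved, because the hard-core star computation relied on $X=1$ and $Y\ne\mathbf{0}$ being exclusive. The paper instead uses $Q(Y)=\Pr(X=1\mid Y)\le a:=\lambda/(1+\lambda)$. This gives $\operatorname{Var}(Q(Y))\le p(a-p)\le ap(1-p)$, hence $\rho_{\mathrm{HGR}}\le\sqrt{\lambda/(1+\lambda)}$.

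A cluster expansion in $\lambda$ around $0$, your other guess, cannot give a uniform $\lambda_0$. The hard-core partition function $\lambda+(1+\lambda)^d$ of the planar star $K_{1,d}$ has a real zero in $(-\ln d/d,0)$.

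Finally, your displayed weight has a spurious factor $\beta^{-e(S)}$; it should be $\lambda^{\abs{S}}(\gamma/\beta)^{e(S)}\beta^{-\abs{\partial S}}$.
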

Again, the constant in \Cref{thm:2-spin} can be taken as $\lambda_0=\frac{1}{1443}$.

In fact, when $\beta\gamma=1$, then the partition function factorises and can be computed exactly in polynomial time.
In the so-called ferromagnetic regime $\beta\gamma>1$, there is always $\lambda_0>0$ such that an FPRAS exists for $\lambda<\lambda_0$ for general graphs \cite{GJP03}.
As a side note, the approximation complexity of partition functions of ferromagnetic 2-spin systems in general graphs is still not completely clear, despite a lot of effort \cite{GJP03,LLZ14,GL18,GLL20,FGY26a,FGY26b}.

The main contribution of \Cref{thm:2-spin} lies in the antiferromagnetic regime $\beta\gamma<1$.
In that case, combined with the hardness result in \cite{GJM15}, \Cref{thm:2-spin} implies that computational phase transition can happen in planar graphs if and only if $0\le \gamma<1\le\beta$.
In some sense, this is because pinnings in this regime only make the effective $\lambda$ smaller, and thus the parameters stay in the same regime.
On the other hand, if $\gamma,\beta<1$, then a vertex can have a lot of neighbours pinned to $1$, which makes the effective $\lambda$ arbitrarily large, preventing the algorithm from generalising to this case.
In fact, the same barrier exists for planar colourings as well, namely, the effective number of colours can be arbitrarily small with a bad pinning.
We expect all spin systems that can stay within the same parameter regime after arbitrary pinnings to admit a computational phase transition in planar graphs.

\subsection{Technical overview}

The main ideas of all proofs in this paper were found by GPT-5.6 Sol Ultra.
For consistency with standard mathematical exposition, the words ``we'' and ``our'' are used throughout the paper, including when presenting ideas that originate in the output of the model.
The authors simplified, streamlined, and wrote all of the proofs. 
The authors take full responsibility for the correctness of the paper.

For \Cref{thm:HC}, the main idea is to use block dynamics, where the blocks are given by a star colouring \cite{ACKKR04} using $20$ colours.
A star colouring is a proper colouring and any two colour classes form a star forest.
For any planar graph, such a colouring always exists and can be found in polynomial time.
We develop a framework where bounding the Hirschfeld–Gebelein–Rényi maximal correlation \cite{Ren59} between two blocks under arbitrary pinnings is sufficient to bound the spectral gap of the block dynamics.
The Hirschfeld–Gebelein–Rényi maximal correlation is particularly convenient for our purpose, since it tensorises across the stars.
Via a standard comparison argument, the single-site Glauber dynamics also mixes rapidly.
See \Cref{sec:Glauber}.

After uploading our paper to arXiv, Zongchen Chen pointed out to us that the key quantity we need to bound coincides with the pairwise spectral influence introduced by Leake and Oveis Gharan \cite{LO25}.
Our \Cref{lem:dobrushin} actually resolves an open problem of theirs.
See \Cref{rem:LO25}.

For the hardness proofs, the ideas are rather standard.
We find a suitable optimisation problem so that we can construct a new graph, where the dominating contribution to the partition function comes from configurations of the new graph corresponding to the optimal solution of the original instance.
The tricky part though is to find the right problem to reduce from, and this is where GPT-5.6 Sol Ultra is particularly helpful.
For counting colourings in planar graphs, we reduce from a problem called graph coalition structure generation (GCSG), which is known to be \NP{}-hard for planar graphs \cite{VPJ12}.
The hardness here essentially comes from maximum independent set in planar graphs \cite{GJS76}, but the form of GCSG is a lot more convenient.
For anti-ferromagnetic 2-spin systems, instead, we give a direct reduction from the maximum independent set problem in planar graphs.

\section{FPRAS for planar hard-core}

To approximate $\ZHC$, we reduce the problem to sampling from the so-called Gibbs distribution.
For the sampling task, we use block dynamics, where the blocks are given by a so-called star colouring \cite{ACKKR04} using a constant number of colours.
A star colouring is a proper colouring and any two colour classes form a star forest.
We treat each colour class as a block, and show that the block dynamics is rapidly mixing when $\lambda$ is sufficiently small.
In fact, via standard comparison arguments, we can show that the single-site Glauber dynamics is rapidly mixing as well.

\subsection{Markov chain preliminaries}

Let $P$ be the transition matrix of a discrete-time Markov chain on a finite state space $\Omega$.
Thus, $P(x,y)\ge 0$ and $\sum_{y\in\Omega}P(x,y)=1$ for every $x\in\Omega$.
We write $P^t(x,\cdot)$ for the distribution of the chain after $t$ steps when it is started from $x$.
A probability distribution $\pi$ on $\Omega$ is \emph{stationary} for $P$ if $\pi P=\pi$.

The chain $P$ is \emph{irreducible} if, for every $x,y\in\Omega$, there is some $t\ge 0$ such that $P^t(x,y)>0$.
It is \emph{aperiodic} if $\gcd\set{t\ge 1:P^t(x,x)>0}=1$ for every $x\in\Omega$.
Every finite irreducible chain has a unique stationary distribution $\pi$.
It converges to $\pi$ from every initial state if it is also aperiodic.
The chain is \emph{reversible} with respect to $\pi$ if $\pi(x)P(x,y)=\pi(y)P(y,x)$ for all $x,y\in\Omega$.
Let $L^2(\pi)$ denote the space of real-valued functions on $\Omega$ equipped with the inner product $\inner{\pi}{f,g}=\sum_{x\in\Omega}\pi(x)f(x)g(x)$.
For $f\in L^2(\pi)$, its norm is $\norm{\pi,2}{f}:=\sqrt{\inner{\pi}{f,f}}$.
Reversibility implies that $P$ is self-adjoint on $L^2(\pi)$, and hence that all its eigenvalues are real.

For probability distributions $\mu$ and $\nu$ on $\Omega$, their total-variation distance is $\DTV{\mu}{\nu}:=\frac12\sum_{x\in\Omega}\abs{\mu(x)-\nu(x)}$.
For an irreducible and aperiodic chain, define its mixing time by $\mixingtime{\eps}:=\min\set{t\ge 0:\max_{x\in\Omega}\DTV{P^t(x,\cdot)}{\pi}\le\eps}$ for $0<\eps<1$.

In this paper we consider Glauber dynamics, which updates a uniformly at random variable at each step, conditioned on all other variables being fixed,
and more generally, block dynamics, which updates a random block of variables instead.
It is known that for Glauber or block dynamics, there is no negative eigenvalue \cite{DGU14}.
Moreover, they are reversible and, for the problems considered in this paper, irreducible.
Let $\lambda_2(P)$ denote the second largest eigenvalue of $P$, and $\operatorname{gap}(P)\defeq1-\lambda_2(P)$.
Let $\pimin\defeq\min_{x\in\Omega}\pi(x)$.
Then it is standard \cite{LP17} that, for $0<\eps<1$,
\begin{align}\label{eq:spectral-mixing-bound}
    \mixingtime{\eps} = O\left( \frac{1}{\operatorname{gap}(P)}    \log\frac{1}{\eps\pimin}\right).
\end{align}

For the hard-core model, its Gibbs distribution is defined by $\mu(I)\propto \lambda^{\abs{I}}$ for every independent set $I\in\+I(G)$.
If we can generate a sample from $\mu$ in polynomial time, then we can use the standard self-reduction \cite{JVV86} or simulated annealing \cite{SVV09} to approximate $\ZHC$ in polynomial time as well.
Thus, in light of \eqref{eq:spectral-mixing-bound}, the main task is to bound the spectral gap of Glauber or block dynamics.

\subsection{A sufficient condition to bound the spectral gap}

Now we derive a new sufficient condition to bound the spectral gap of Glauber dynamics.
A key quantity is the Hirschfeld–Gebelein–Rényi maximal correlation \cite{Ren59}, which is a measure of correlation between two random variables.
For two real-valued random variables $U$ and $V$ with positive variance, define 
\begin{align*}
  \mathrm{Cov}(U,V)&\defeq\Ex[(U-\Ex[U])(V-\Ex[V])],  \\
  \mathrm{Corr}(U,V)&\defeq\frac{\mathrm{Cov}(U,V)}{\sqrt{\operatorname{Var}(U)\operatorname{Var}(V)}}.
\end{align*}

\begin{definition}[Hirschfeld–Gebelein–Rényi maximal correlation]

Let $X$ and $Y$ be jointly distributed random variables taking values in finite sets $\+X$ and $\+Y$, respectively.
Their Hirschfeld–Gebelein–R\'enyi maximal correlation is
\begin{align*}
\rho_{\mathrm{HGR}}(X,Y) := \max_{\substack{f: \+X \to \mathbb{R},\,\operatorname{Var}(f(X))>0\\ g: \+Y \to \mathbb{R},\,\operatorname{Var}(g(Y))>0}} \mathrm{Corr}(f(X),g(Y)).
\end{align*}
\end{definition}

\begin{remark}
    Let $H_{X} = \{f:\+X \to \mathbb{R}\}$ and $H_{Y} = \{g:\+Y \to \mathbb{R}\}$. Let $(u_i)_{1 \le i \le s}$ and $(v_j)_{1 \le j \le t}$ be orthonormal bases of $H_{X}$ and $H_{Y}$ respectively in $L^2(\mu)$, where $u_1$ and $v_1$ are the all-$1$ functions. The HGR maximal correlation $\rho_{\mathrm{HGR}}$ is exactly the second largest singular value of an $s$-by-$t$ matrix $R=(r_{ij})$, where $r_{i,j} = \inner{\mu}{u_i,v_j}$. Therefore, for random variables $(X,Y) \sim \bigotimes_{i=1}^k \mu_i$, the HGR maximal correlation between $X=(X_1,X_2,\ldots,X_k)$ and $Y=(Y_1,Y_2,\ldots,Y_k)$ satisfies
    \begin{align}\label{eq:max-principle}
        \rho_{\mathrm{HGR}}(X,Y) = \max_{1 \le i\le k} \rho_{\mathrm{HGR}}(X_i,Y_i).
    \end{align}
\end{remark}

For any $f\in L^2(\mu)$, $(X_1,X_2)\sim\mu$, and $i\in\{1,2\}$, let $P_i$ be the averaging operator over the $i$-th coordinate defined by
$P_i f\defeq\Ex_\mu[f\mid X_{-i}]$,
where $X_{-i}$ denotes the other coordinate.
Thus, $P_i$ is the orthogonal projection in $L^2(\mu)$ onto the functions of $X_{-i}$.
Let $L_i\defeq I-P_i$.
\begin{lemma}\label{lem:soft-Cauchy}
    Let $\mu$ be a distribution over a finite space $\+X_1 \times \+X_2$. For any $f:\+X_1 \times \+X_2 \to \mathbb{R}$, it holds
    \begin{align*}
    \inner{\mu}{L_1 f, L_2f} \ge -\rho \norm{\mu, 2}{L_1 f} \norm{\mu, 2}{L_2 f},
    \end{align*}
    where $(X,Y) \sim \mu$ and $\rho = \rho_{\mathrm{HGR}}(X,Y)$.
\end{lemma}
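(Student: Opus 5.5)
Write $(X,Y)=(X_1,X_2)\sim\mu$. Note that $P_1 f=\Ex_\mu[f\mid X_2]$ is a function of $Y$ alone, and $P_2 f=\Ex_\mu[f\mid X_1]$ is a function of $X$ alone. The plan is to expand the inner product and reduce everything to a single covariance between a function of $X$ and a function of $Y$, to which the definition of $\rho$ applies directly.

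\textbf{Step 1: expand.} Using $L_i=I-P_i$,
\begin{align*}
\inner{\mu}{L_1 f, L_2 f} = \norm{\mu,2}{f}^2 - \inner{\mu}{f,P_1 f} - \inner{\mu}{f,P_2 f} + \inner{\mu}{P_1 f, P_2 f}.
\end{align*}
Since each $P_i$ is an orthogonal projection, $\inner{\mu}{f,P_i f}=\norm{\mu,2}{P_i f}^2$. This gives
\begin{align*}
\inner{\mu}{L_1 f, L_2 f} = \norm{\mu,2}{f}^2 - \norm{\mu,2}{P_1 f}^2 - \norm{\mu,2}{P_2 f}^2 + \inner{\mu}{P_1 f,P_2 f}.
\end{align*}
We may assume without loss of generality that $\Ex_\mu f=0$, because adding a constant to $f$ changes neither $L_1 f$ nor $L_2 f$. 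Then $a\defeq P_2 f$ (a function of $X$) and $b\defeq P_1 f$ (a function of $Y$) both have mean zero. By the definition of $\rho$ (with the cases of zero variance handled trivially),
\begin{align*}
\inner{\mu}{a,b}\ge -\rho\,\norm{\mu,2}{a}\norm{\mu,2}{b}.
\end{align*}
We also have $\norm{\mu,2}{L_1 f}^2=\norm{f}^2-\norm{b}^2$ and $\norm{\mu,2}{L_2 f}^2=\norm{f}^2-\norm{a}^2$, where all norms are in $L^2(\mu)$.

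\textbf{Step 2: reduce to a scalar inequality.} Write $F=\norm{f}^2$, $\alpha=\norm{a}$, $\beta=\norm{b}$. It suffices to show
\begin{align*}
F-\alpha^2-\beta^2-\rho\alpha\beta \ge -\rho\sqrt{(F-\alpha^2)(F-\beta^2)}.
\end{align*}
This raw scalar inequality is too weak on its own. For example, with $\rho=1$ it would require $F-\alpha^2-\beta^2-\alpha\beta\ge-\sqrt{(F-\alpha^2)(F-\beta^2)}$, which can fail. So we must also use the constraint linking $F$ to $a$ and $b$, which comes from $\rho$ itself.

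\textbf{Step 3: the extra constraint.} Decompose $f=a+b+h$, where $h$ is orthogonal to all functions of $X$ alone and to all functions of $Y$ alone. This decomposition is not orthogonal between $a$ and $b$, but it satisfies $P_2 f=a+P_2 b$. This suggests the following cleaner route. Let $H_X$ and $H_Y$ be the mean-zero subspaces of functions of $X$ and of $Y$. The quantity $\rho$ is the cosine of the minimal angle between $H_X$ and $H_Y$. Then $L_2 f$ lies in $H_X^\perp$ and $L_1 f$ lies in $H_Y^\perp$. The claim thus becomes: the cosine of the angle between $H_X^\perp$ and $H_Y^\perp$ is at most $\rho$, restricted to the relevant complements. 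This is the classical fact that the Friedrichs angle between two closed subspaces equals the Friedrichs angle between their orthogonal complements.

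Concretely, we apply it to $u=L_2 f\in H_X^\perp$ and $v=L_1 f\in H_Y^\perp$. Both lie in $\bigl(H_X^\perp\cap H_Y^\perp\bigr)^{\perp}$ modulo a common component $w$, namely the projection of $f$ onto $H_X^\perp\cap H_Y^\perp$. This common component contributes $+\norm{w}^2\ge0$ to the inner product and shrinks the right-hand side appropriately. The precise step is the inequality $|\inner{}{u',v'}|\le\rho\norm{u'}\norm{v'}$ for $u',v'$ in the complements with $w$ removed; we then combine it with the $+\norm{w}^2$ term using Cauchy–Schwarz.

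\textbf{Main obstacle.} The main obstacle is proving this complement-angle equality in finite dimensions. I would prove it via the SVD of $R$ from the Remark. Choose singular bases pairing $H_X$ and $H_Y$ into two-dimensional blocks with cosines $\sigma_k\le\rho$. Within each $2$-dimensional block, the complements are lines at the same angle, so $u$ and $v$ are explicitly computable. The inequality then reduces to the blockwise check $\langle u_k,v_k\rangle\ge-\sigma_k\norm{u_k}\norm{v_k}$, and summing over the blocks with Cauchy–Schwarz finishes the proof.
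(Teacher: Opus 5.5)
Your proposal is correct, but it takes a genuinely different route from the paper. Let $A$ and $B$ be the spaces of functions of $X$ and of $Y$. You read the lemma as the classical duality between the Friedrichs angle of $(A,B)$ and that of $(A^\perp,B^\perp)$. Since $L_2f\in A^\perp$ and $L_1f\in B^\perp$ have the same projection $w$ onto $A^\perp\cap B^\perp$ (the projection of $f$), the inner product splits as $\|w\|^2+\langle L_2f-w,\,L_1f-w\rangle$. The second term is bounded in absolute value by the Friedrichs cosine times the norms, hence by $\rho\|L_2f\|\|L_1f\|$, and you prove the duality via the principal-angle decomposition.

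The paper never leaves the scalar level of your Step~2. The constraint you correctly saw was missing there is just positive semidefiniteness of the $3\times 3$ Gram matrix of the normalised vectors $f$, $P_1f$, $P_2f$. With $u=\|P_1f\|^2/\|f\|^2$, $v=\|P_2f\|^2/\|f\|^2$, and $\alpha$ the correlation of $P_1f$ and $P_2f$, this gives $\alpha\ge\sqrt{uv}-\sqrt{(1-u)(1-v)}$. Combined with the \emph{upper} bound $\alpha\le\rho$ (not the lower bound $\alpha\ge-\rho$ you tried), it closes the inequality in one line.

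The paper's argument is shorter and fully self-contained. Yours is more conceptual: it shows the common component in $(A+B)^\perp$ is the only thing beyond $\rho$, and it proves the bound for any pair of vectors in $A^\perp$ and $B^\perp$ whose components in $A^\perp\cap B^\perp$ have nonnegative inner product. The price is citing or reproving the two-subspace structure theory.

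Two small points if you write it out. First, drop the sentence decomposing $f=a+b+h$ with $h$ orthogonal to all functions of $X$ and of $Y$. With $a=P_2f$ and $b=P_1f$, such an $h$ exists only if $P_2b=0$ and $P_1a=0$, so as stated it is inconsistent (you abandon it anyway).

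Second, the SVD step must account for the degenerate summands $A\cap B$, $A\cap B^\perp$, $A^\perp\cap B$ and $A^\perp\cap B^\perp$, not only the two-dimensional blocks. These are harmless: they contribute $0$ or $\|w\|^2$ to the inner product and only enlarge the norms. The block cosines are at most $\rho$ because they are singular values of the mean-zero part of the matrix $R$ from the Remark. If $A\cap B$ contains a nonconstant function, then $\rho=1$ and the claim is just Cauchy--Schwarz.
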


\begin{proof}
    Without loss of generality, we assume $\Ex_\mu[f] = 0$.
    If $f=0$, then the claim is immediate.
    Let $u = \frac{\norm{\mu,2}{P_1 f}^2}{\norm{\mu,2}{f}^2}$, $v = \frac{\norm{\mu,2}{P_2 f}^2}{\norm{\mu,2}{f}^2}$.
    If $P_1f=0$ or $P_2f=0$, then
    \begin{align*}
        \inner{\mu}{L_1f,L_2f}
        =\tp{1-u-v}\norm{\mu,2}{f}^2\geq0,
    \end{align*}
    and the claim follows.
    We may therefore assume that both $P_1f$ and $P_2f$ are nonzero.
    Let $\alpha \defeq \frac{\inner{\mu}{P_1f, P_2 f}}{\norm{\mu,2}{P_1 f} \norm{\mu,2}{P_2 f}} = \frac{\inner{\mu}{P_1 f,P_2 f}}{\sqrt{uv} \norm{\mu,2}{f}^2}$. 
    Since $P_1f$ and $P_2f$ are centered functions of $X_2$ and $X_1$, respectively, the definition of HGR maximal correlation gives $\abs{\alpha}\leq\rho$.
    By the definitions and $\inner{\mu}{f, P_i f} = \norm{\mu,2}{P_i f}^2$, the inner product $\inner{\mu}{L_1f, L_2f}$ satisfies
    \begin{align*}
        \inner{\mu}{L_1 f, L_2 f} = \tp{1-u-v+\alpha \sqrt{uv}} \norm{\mu,2}{f}^2.
    \end{align*}
    Note that $\norm{\mu,2}{L_1 f}^2 = (1-u)  \norm{\mu,2}{f}^2$ and $\norm{\mu,2}{L_2 f}^2 = (1-v)  \norm{\mu,2}{f}^2$. Therefore, it suffices to show
    \begin{align}\label{eq:target-soft-Cauchy}
        1-u-v+\alpha \sqrt{uv} + \rho \sqrt{(1-u)(1-v)} \ge 0.
    \end{align}
    Consider the following $3$-by-$3$ Gram matrix of the vectors $\frac{f}{\norm{\mu,2}{f}}$, $\frac{P_1 f}{\norm{\mu,2}{P_1 f}}$ and $\frac{P_2 f}{\norm{\mu,2}{P_2 f}}$:
    \begin{align*}
        G = 
        \begin{bmatrix}
        1 & \sqrt{u} & \sqrt{v}\\
        \sqrt{u} & 1 & \alpha\\
        \sqrt{v} & \alpha & 1
        \end{bmatrix}.
    \end{align*}
    Since Gram matrices are positive semidefinite, the determinant $\det(G) = 1-u-v + 2 \sqrt{uv} \alpha - \alpha^2 \ge 0$, which implies $\alpha \ge \sqrt{uv} - \sqrt{(1-u)(1-v)}$. Together with $\alpha \le \rho$, the left-hand side of~\eqref{eq:target-soft-Cauchy} becomes
    \begin{align*}
    \text{LHS of~\eqref{eq:target-soft-Cauchy}} & \ge 1 - u - v + \tp{\sqrt{uv} - \sqrt{(1-u)(1-v)}} \tp{\sqrt{uv} + \sqrt{(1-u)(1-v)}} = 0. \qedhere
    \end{align*}
\end{proof}

For a real matrix $R$, its matrix $2$-norm is $\norm{2}{R}:=\max_{x\ne 0}\norm{2}{Rx}/\norm{2}{x}$, where $\norm{2}{x}:=(\sum_i x_i^2)^{1/2}$ is the Euclidean norm.
A pinning $\tau$ on a subset $S \subseteq [n]$ is an assignment of values to the variables in $S$,
and feasible pinnings are those that have non-zero probability under $\mu$.
The main result of this subsection is the next lemma.

\begin{lemma}\label{lem:dobrushin}
Let $\mu$ be a distribution over a finite set $\+X_1 \times \+X_2 \times \ldots \times \+X_n$. For each $i \neq j$, let $r_{ij}$ be the maximum value of $\rho_{\mathrm{HGR}}(X^\tau_i,X^\tau_j)$ over all feasible pinnings of $\tau$ on $[n] \setminus \{i,j\}$ where $(X^\tau_i,X^\tau_j) \sim \mu^\tau_{\{i,j\}}$. Moreover, let $R \defeq (r_{i,j})_{1 \le i,j \le n}$, where $r_{i,i}=0$. 

If $\norm{2}{R} < 1$ and the Glauber dynamics $P$ on $\mu$ is irreducible, it satisfies
\begin{align*}
    \lambda_2(P) \le 1 - \frac{1-\norm{2}{R}}{n}.
\end{align*}
\end{lemma}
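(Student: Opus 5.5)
Write the Glauber dynamics as $P=\frac1n\sum_{i=1}^n P_i$, where $P_i f=\Ex_\mu[f\mid X_{-i}]$ is the orthogonal projection in $L^2(\mu)$ onto functions not depending on $x_i$, and let $L_i=I-P_i$. Since $1-\lambda_2(P)$ is the infimum of the Dirichlet form over variance, it suffices to show that for every $f$ with $\Ex_\mu f=0$,
\begin{align*}
  \sum_{i=1}^n \norm{\mu,2}{L_i f}^2 \;\ge\; (1-\norm{2}{R})\,\norm{\mu,2}{f}^2 .
\end{align*}
Indeed, $\inner{\mu}{f,(I-P)f}=\frac1n\sum_i\norm{\mu,2}{L_if}^2$. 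By irreducibility, the only functions killed by every $L_i$ are the constants, so the eigenvalue $1$ corresponds only to constants. The plan is a Dobrushin-type argument: expand $\norm{\mu,2}{f}^2$ through the $L_i f$, and control the cross terms by \Cref{lem:soft-Cauchy} applied to pinned pairs.

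\textbf{Step 1 (decomposition).} First I would establish the identity (or inequality)
\begin{align*}
  \norm{\mu,2}{f}^2 \le \sum_i \norm{\mu,2}{L_i f}^2 + \sum_{i\neq j}\bigl(-\inner{\mu}{L_if,L_jf}\bigr)^{+}.
\end{align*}
Two routes are available. The first computes $\norm{\mu,2}{\sum_i L_i f}^2$ directly. The second uses $\inner{\mu}{f,L_if}=\norm{\mu,2}{L_if}^2$ together with a telescoping/martingale decomposition of $f$ along coordinates. I would try the martingale route: order the coordinates and write $f=\sum_k (E_{k}-E_{k-1})f$, where $E_k$ conditions on $X_1,\dots,X_k$. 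Alternatively one can bound $\lambda_2$ through $P$ itself: $n\inner{\mu}{f,Pf}=\sum_i\norm{\mu,2}{P_if}^2$, and one needs a quadratic-form bound.

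The cleanest formulation I would aim for is the following. Let $a_i=\norm{\mu,2}{L_if}$. Because $\Ex f=0$ and irreducibility holds, $f=\sum_i$ (contributions) … In practice I expect to prove the variance bound $\norm{\mu,2}{f}^2\le \sum_i a_i^2 + \sum_{i\ne j} r_{ij}a_ia_j\cdot(\ldots)$ by induction on $n$. The inductive step conditions on one coordinate, uses the law of total variance, and applies the hypothesis to the pinned measures. Pinned measures still satisfy the hypothesis with a principal submatrix of $R$, whose norm is at most $\norm{2}{R}$.

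\textbf{Step 2 (pairwise cross terms).} For a fixed pair $i\ne j$, condition on $X_{[n]\setminus\{i,j\}}=\tau$. Under $\mu^\tau$, the operators $L_i,L_j$ restrict exactly to the two-coordinate operators of \Cref{lem:soft-Cauchy}. Hence
\begin{align*}
  \inner{\mu^\tau}{L_if,L_jf}\ge -r_{ij}\norm{\mu^\tau,2}{L_if}\norm{\mu^\tau,2}{L_jf}.
\end{align*}
Averaging over $\tau$ and applying Cauchy--Schwarz gives $\inner{\mu}{L_if,L_jf}\ge -r_{ij}a_ia_j$.

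\textbf{Step 3 (conclusion).} Combining the steps gives a bound of the form
\begin{align*}
  \norm{\mu,2}{f}^2\cdot(\text{something}) \le a^{\top}(I+R)a
  \qquad\text{or}\qquad
  a^{\top}(I-R)a \le \sum_i a_i^2 - \ldots,
\end{align*}
and $a^{\top}Ra\le\norm{2}{R}\norm{2}{a}^2$ finishes the argument.

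The main obstacle is Step 1, namely the correct sign and structure. The identity $\inner{\mu}{f,\sum_iL_if}=\sum_i a_i^2$ gives $\sum_i a_i^2\le\norm{\mu,2}{f}\,\norm{\mu,2}{\sum_iL_if}$. Expanding $\norm{\mu,2}{\sum_i L_if}^2=\sum_i a_i^2+\sum_{i\ne j}\inner{\mu}{L_if,L_jf}$ needs an \emph{upper} bound on the cross terms, whereas the lemma gives a lower bound. So I would instead look at the spectrum of $nP=\sum_iP_i$ near its smallest part, or equivalently apply the argument to an eigenfunction $f$ of $P$ with eigenvalue $\lambda_2$. For such $f$ we have $\sum_iL_if=n(1-\lambda_2)f$, so
\begin{align*}
  n(1-\lambda_2)\sum_i a_i^2=\norm{\mu,2}{\textstyle\sum_iL_if}^2\ge(1-\norm{2}{R})\sum_ia_i^2
\end{align*}
by Step 2. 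This gives $1-\lambda_2\ge(1-\norm{2}{R})/n$, provided $\sum_ia_i^2>0$, which holds for nonconstant $f$ by irreducibility. This eigenfunction trick is the route I would commit to.
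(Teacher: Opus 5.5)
Your committed route is correct and is essentially the paper's proof. You take $f$ to be a $\lambda_2(P)$-eigenfunction, apply \Cref{lem:soft-Cauchy} under each pinning of $[n]\setminus\{i,j\}$ and average with Cauchy--Schwarz to get $\inner{\mu}{L_if,L_jf}\ge -r_{ij}a_ia_j$, bound $\sum_{i\ne j}r_{ij}a_ia_j\le\norm{2}{R}\sum_i a_i^2$, and use $\sum_iL_if=n(1-\lambda_2(P))f$ to evaluate both sides. The exploratory martingale/induction ideas in your Step 1 are not needed and can be dropped.
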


\begin{proof}
    Let $f$ be the eigenvector corresponding to the second largest eigenvalue $\lambda_2(P)$.
    Again, let $L_i \defeq I - P_i$ for $i\in[n]$, where $P_i$ is the averaging operator on the $i$-th coordinate. Then, as $P=\frac{1}{n} \sum_{i=1}^n P_i$, we have $\sum_{i=1}^n L_i = n(I-P)$.
    For fixed $i\neq j$, let $Y=X_{[n]\setminus\{i,j\}}$ and condition on a feasible assignment $Y=\tau$.
    Applying~\Cref{lem:soft-Cauchy} to the resulting two-coordinate distribution gives
    \begin{align*}
        \Ex_\mu[L_ifL_jf\mid Y=\tau]
        \geq-r_{ij}
        \sqrt{\Ex_\mu[(L_if)^2\mid Y=\tau]}
        \sqrt{\Ex_\mu[(L_jf)^2\mid Y=\tau]}.
    \end{align*}
    Averaging over $Y$ and applying Cauchy--Schwarz yields
    \begin{align}\label{eq:conditional-soft-Cauchy}
        \inner{\mu}{L_if,L_jf}
        &\geq-r_{ij}\Ex_T\sqtp{
        \sqrt{\Ex_\mu[(L_if)^2\mid Y=\tau]}
        \sqrt{\Ex_\mu[(L_jf)^2\mid Y=\tau]}}\nonumber\\
        &\geq-r_{ij}\norm{\mu,2}{L_i f}\norm{\mu,2}{L_j f}.
    \end{align}
    Since $P$ is irreducible, $\lambda_2(P) < 1$.
    Summing~\eqref{eq:conditional-soft-Cauchy} over $i\neq j$, we obtain
    \begin{align}\label{eq:quadratic}
        \nonumber \norm{\mu, 2}{\sum_{i=1}^n L_i f}^2 = \sum_{i=1}^n \norm{\mu,2}{L_i f}^2 + \sum_{i \neq j} \inner{\mu}{L_i f,L_j f} &\ge \sum_{i=1}^n \norm{\mu,2}{L_i f}^2 - \sum_{i \neq j} r_{ij} \norm{\mu,2}{L_i f} \norm{\mu,2}{L_j f}\\
        &\ge (1-\norm{2}{R}) \sum_{i=1}^n \norm{\mu,2}{L_i f}^2,
    \end{align}
    where the last inequality is due to the Cauchy–Schwarz inequality and the definition of matrix $2$-norm.

    Since $\sum_{i=1}^n L_i = n(I-P)$, the left-hand side of~\eqref{eq:quadratic} is exactly $n^2 (1-\lambda_2(P))^2 \norm{\mu,2}{f}^2$. Similarly, the right-hand side of~\eqref{eq:quadratic} satisfies
    \begin{align*}
        (1-\norm{2}{R}) \sum_{i=1}^n \norm{\mu,2}{L_i f}^2 = (1-\norm{2}{R}) \sum_{i=1}^n \inner{\mu}{f,L_i f} = (1-\norm{2}{R}) n (1-\lambda_2(P)) \norm{\mu,2}{f}^2. 
    \end{align*}
    Combining with~\eqref{eq:quadratic}, this completes the proof.
\end{proof}

To apply \Cref{lem:dobrushin}, we need to bound the HGR maximal correlation between every pair of coordinates under arbitrary feasible pinnings of the remaining coordinates.
By the tensorisation property in~\eqref{eq:max-principle}, this is particularly convenient for block dynamics when the conditional joint distribution of two blocks factorises, as we shall see next.
These HGR correlations can also be interpreted as the nontrivial singular values of the weighted codimension-two links in the simplicial complex of feasible partial configurations.
This interpretation is related to the trickle-down and local-to-global approaches \cite{Opp18,AL20}, although \Cref{lem:dobrushin} does not seem to follow directly from them.
In principle, the matrix trickle-down theorem \cite{ALO21} should be able to recover it, but it seems non-trivial and would involve constructing particular matrices, which we will not do here.

\begin{remark}\label{rem:LO25}
  After the release of the manuscript, Zongchen Chen pointed out to us that the matrix $R$ defined above is entrywise the same as the pairwise  influence matrix introduced by Leake and Oveis Gharan~\cite{LO25}.
  See \cite[Theorem 33.12]{PW25}.
  Thus, our refined spectral gap result, \Cref{lem:dobrushin}, addresses their open question of improving the mixing time from $O\tp{n^{1+1/\delta}}$ to $O_\delta\tp{n^2}$.
\end{remark}

\subsection{The algorithm}
In this section we prove \Cref{thm:HC}.
A key ingredient is the star colouring, defined next.
We allow isolated vertices as stars.

\begin{definition}[star colouring]\label{def:star-colouring}
Let $G=(V,E)$ be an undirected graph and $\sigma \in [q]^V$ be a proper colouring. The colouring $\sigma$ is a \emph{star colouring} if for any two colours $i$ and $j$, each connected component in the subgraph induced by vertices with either colour $i$ or colour $j$ is a star.
\end{definition}

Star colourings can be found in polynomial time for planar graphs for $20$ colours.

\begin{proposition}[\cite{ACKKR04}]\label{prop:star-colouring}
    For any planar graph, there is a star colouring using $20$ colours, and it can be found in quadratic time.\footnote{In \cite{ACKKR04}, no explicit time complexity is given. The main bottleneck of the algorithm is to find an acyclic $5$-colouring and a $4$-colouring of a planar graph. Both tasks can be done in quadratic time \cite{Bor79,RSST97}.}
\end{proposition}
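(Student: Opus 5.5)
The plan is to follow the construction of \cite{ACKKR04}: combine an acyclic colouring with a proper colouring of a suitable auxiliary graph. First, invoke Borodin's theorem \cite{Bor79}: every planar graph has an acyclic $5$-colouring $\alpha$, i.e.\ a proper colouring in which any two colour classes induce a forest. For each pair of colour classes $\{a,b\}$ in $\alpha$, the induced subgraph is a forest. I will root each tree component arbitrarily, so every non-root vertex $v$ gets a parent $p_{ab}(v)$ in that bichromatic forest. Each vertex then has at most $4$ parents, one for each other colour class. Orient each edge from child to parent; this gives an orientation of $G$ in which every edge is oriented exactly once, since each edge lies in exactly one bichromatic forest.

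The goal is a second colouring $\beta$ that is constant on each "bad" configuration. Form an auxiliary graph $H$ on $V$ by joining each vertex to its grandparents in each bichromatic forest, that is, $v \sim p_{ab}(p_{ab}(v))$. If $\beta$ is proper on $H$, then consider the product colouring $(\alpha(v),\beta(v))$ with $5\cdot c$ colours. The aim is to restrict to $\beta$ with only $4$ colours so that the total is $20$.

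The cleaner route, which I believe is the actual one in \cite{ACKKR04}, is to take $\beta$ to be a $4$-colouring of a planar graph derived from the orientation, and then check the star property. Concretely, I will try this. In each bichromatic tree, a path on $4$ vertices must contain a vertex together with its parent and grandparent, or two siblings together with their parent and one child. Star-ness of the refined two-class subgraphs means there is no path $P_4$ on two colours. Such a path lies inside an $\alpha$-bichromatic forest. I will show it is broken if $\beta(v)\neq\beta(p(p(v)))$ and, for siblings, $\beta$ separates them appropriately. The key claim is that the graph obtained by contracting, or otherwise encoding, the parent relations so that these constraints become the edges of a planar graph is still planar. That planar graph is then $4$-colourable by \cite{RSST97} in quadratic time.

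The main obstacle is choosing the auxiliary graph so that it is simultaneously planar, giving $4$ colours, and strong enough to kill every bichromatic $P_4$. Grandparent edges may destroy planarity. My first attempt is to instead contract each vertex onto a chosen out-neighbour and check that conflicts become adjacencies in a minor of $G$, which is planar.

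For the running time: the acyclic $5$-colouring is quadratic \cite{Bor79}, the $4$-colouring is quadratic \cite{RSST97}, and building the forests, orientation and auxiliary graph is linear. So the total is quadratic.
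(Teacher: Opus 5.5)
Your outline matches what the paper relies on. The paper only cites \cite{ACKKR04}, noting that the work is an acyclic $5$-colouring \cite{Bor79} plus a $4$-colouring of a planar graph \cite{RSST97}. Your first reduction is also sound: a proper colouring $\beta$ of the grandparent graph $H$ makes $(\alpha,\beta)$ a star colouring.

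The gap is the step that produces the number $20$. You never show that some planar, hence $4$-colourable, graph encodes all the constraints; you list this as the main obstacle. Your suggested fix, contracting each vertex onto a chosen out-neighbour, fails as a global contraction. Along a chain $v\to p(v)\to p(p(v))$ in one forest $F_{ab}$, the vertex $v$ and its grandparent can land in the same branch set, so that constraint becomes a loop instead of an edge.

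Your extra sibling condition is also a wrong turn. Suppose $x_1,x_3$ are children of $x_2$ and $x_4$ is a child of $x_3$. Then $\{x_2,x_4\}$ is already a same-coloured grandparent pair, so the grandparent condition alone kills every bichromatic $P_4$. Forcing siblings apart would instead need unboundedly many colours, since a vertex can have arbitrarily many children.

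The missing idea is that every grandparent pair $\{v,p_{ab}(p_{ab}(v))\}$ has the same $\alpha$-colour. Hence $H=H_1\sqcup\dots\sqcup H_5$, where $H_a$ lives on the class $A=\alpha^{-1}(a)$, and it suffices to show each $H_a$ is a minor of $G$. To do this:
\begin{enumerate}
\item keep only the edges of $G$ incident to $A$;
\item delete each $w\notin A$ that is a root of its tree in $F_{a\alpha(w)}$;
\item contract every other $w\notin A$ into its parent in $F_{a\alpha(w)}$. That parent lies in $A$ and is unique because $a$ is fixed.
\end{enumerate}
The branch sets are disjoint stars centred at the vertices of $A$. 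Every $A$-neighbour of $w$ other than its parent is a child of $w$. So the branch sets of $x,y\in A$ are adjacent exactly when one of $x,y$ is the grandparent of the other in some $F_{ab}$.

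Hence each $H_a$, and so $H$ itself, is planar; your worry that grandparent edges destroy planarity is unfounded. A $4$-colouring of $H$ from \cite{RSST97} then finishes the argument, and your running-time accounting goes through, since $H$ has at most $4n$ edges and is built in linear time.
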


Now we are ready to prove \Cref{thm:HC}.
The algorithm first finds a star colouring of the input planar graph, and then runs the block dynamics on the colour classes.

\begin{proof}[Proof of~\Cref{thm:HC}]
    By~\Cref{prop:star-colouring}, we can find a star colouring with at most $20$ colours in polynomial time.
    Equivalently, the vertex set $V$ is divided into $k=20$ blocks $C_1,C_2,\ldots,C_k$.
    We consider the uniform block dynamics on $G$ with these blocks.
    It is easy to see that the block dynamics is irreducible and can be implemented in polynomial time.
    Moreover, $\log\pimin^{-1}=O(n)$, where $n$ is the number of vertices in $G$.
    By treating the partial configurations on each block as a spin state and applying~\Cref{lem:dobrushin}, it remains to show that the HGR maximal correlation between any two blocks $C_i$ and $C_j$, under arbitrary feasible pinnings outside these blocks, is at most $\frac{1}{2(k-1)}$.
    Pinning a vertex occupied is the same as removing the vertex and its neighbours, and pinning a vertex unoccupied is the same as removing the vertex.
    Thus, we can remove the pinnings and get a smaller graph, which is still a star forest.
    
    By the definition of star colourings, the graph remaining after applying the pinnings is a disjoint union of stars.
    By~\eqref{eq:max-principle}, it suffices to consider a single star $K_{1,d}$ with $d\ge 1$.
    Let $u$ be its center and $v_1,v_2,\ldots,v_d$ its leaves.
    The center and the leaves have different colours because a star colouring is proper.
    It suffices to show that for every nonconstant $f : \{0,1\} \to \mathbb{R}$ and $g : \{0,1\}^d \to \mathbb{R}$,
    \begin{align}\label{eq:cov-bound-target}
        \mathrm{Cov}(f(X),g(Y)) \le \frac{1}{2(k-1)} \sqrt{\Var{}{f(X)} \Var{}{g(Y)}},
    \end{align}
    where $(X,Y)$ are drawn from the Gibbs distribution of the hard-core model on this star.
    Indeed, \eqref{eq:cov-bound-target} implies $\mathrm{Corr}(f(X),g(Y))\le 1/(2(k-1))$.
    We want to apply \Cref{lem:dobrushin}.
    Taking the maximum over $f$ and $g$ gives $r_{ij}\le 1/(2(k-1))$ for all $i\ne j$.
    Since $R$ is symmetric, nonnegative, and has zero diagonal, its matrix norm is at most its maximum row sum, and hence $\norm{2}{R}\le (k-1)/(2(k-1))=1/2<1$.
    Thus,~\eqref{eq:cov-bound-target} is sufficient for applying~\Cref{lem:dobrushin}.

    Let $p:=\Pr(X=1)$ be the marginal probability that the center is occupied and $q:=\Pr(Y\ne\*0_d)$ be the marginal probability that the leaves are not all empty.
    Since $X=1$ and $Y\ne\*0_d$ are mutually exclusive,
    \begin{align*}
        \Pr(X=1,Y=\*0_d)=p,\qquad
        \Pr(X=0,Y\ne\*0_d)=q,\qquad
        \Pr(X=0,Y=\*0_d)=1-p-q.
    \end{align*}
    We may therefore normalize $f$ by $\Ex[f(X)]=0$ and $f(1)-f(0)=1$, which gives $f(0)=-p$ and $f(1)=1-p$.

    Let $h(Y):=\Ex[f(X)\mid Y]$.
    For $g$ with $\Ex g=0$, 
    the tower property gives
    \begin{align*}
        \mathrm{Cov}(f(X),g(Y))=\Ex[h(Y)g(Y)].
    \end{align*}
    Moreover,
    \begin{align*}
        h(y)=
        \begin{cases}
            \displaystyle \frac{p}{1-q}(1-p)+\frac{1-p-q}{1-q}(-p)
              =\frac{pq}{1-q}, & y=\*0_d,\\[6pt]
            -p, & y\ne\*0_d.
        \end{cases}
    \end{align*}
    By Cauchy--Schwarz, $\Ex[h(Y)g(Y)]\le\sqrt{\Var{}{h(Y)}\Var{}{g(Y)}}$, with equality when $g$ is proportional to $h$.
    A maximizing choice that satisfies $\Ex[g(Y)]=0$ is $g(\*0_d)=q$ and $g(y)=-(1-q)$ for every $y\ne\*0_d$.
    A direct calculation gives
    \begin{align*}
        \mathrm{Cov}(f(X),g(Y))
        &=p(1-p)q+q(-p)(-(1-q))+(1-p-q)(-p)q=pq,\\
        \Var{}{f(X)}&=p(1-p),\\
        \Var{}{g(Y)}&=q(1-q).
    \end{align*}
    Consequently, the maximum correlation is
    \begin{align*}
        \rho_{\mathrm{HGR}}(X,Y)
        =\frac{pq}{\sqrt{p(1-p)q(1-q)}}
        =\sqrt{\frac{pq}{(1-p)(1-q)}}.
    \end{align*}
    The partition function of the star is $Z=\lambda+(1+\lambda)^d$, so $p=\lambda/Z$ and $q=((1+\lambda)^d-1)/Z$.
    It follows that
    \begin{align*}
        \rho_{\mathrm{HGR}}(X,Y)
        =\sqrt{\frac{\lambda((1+\lambda)^d-1)}{(1+\lambda)^{d+1}}}
        \le \sqrt{\frac{\lambda}{1+\lambda}}.
    \end{align*}
    Taking $\lambda_0=1/(4(k-1)^2-1)$ makes the last expression at most $1/(2(k-1))$ whenever $\lambda\le\lambda_0$.
    For $k=20$, this gives $\lambda_0=1/1443$ and completes the proof.
\end{proof}

\subsection{Comparison with Glauber dynamics}  \label{sec:Glauber}

The spectral gap analysis in the last subsection also implies a spectral gap lower bound for the single-site Glauber dynamics via a standard comparison argument.

\begin{theorem}\label{thm:glauber-gap}
    Let $\lambda_0=1/1443$ and let $G$ be a planar graph with $n$ vertices.
    For $\lambda\le \lambda_0$, let $P$ be the transition matrix of the Glauber dynamics for the hard-core distribution with fugacity $\lambda$.
    Then,
    \begin{align*}
        \operatorname{gap}(P)\ge \frac{1}{2n}.
    \end{align*}
\end{theorem}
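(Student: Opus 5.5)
We will compare the Glauber dynamics with the block dynamics from the proof of \Cref{thm:HC}, using Dirichlet forms. Let $C_1,\dots,C_k$ with $k=20$ be the blocks given by the star colouring from \Cref{prop:star-colouring}. Let $P_{\mathrm{B}}=\frac1k\sum_{i=1}^k P_{C_i}$ be the uniform block dynamics, where $P_{C_i}f=\Ex_\mu[f\mid X_{V\setminus C_i}]$. The proof of \Cref{thm:HC} shows, via \Cref{lem:dobrushin} with $\norm{2}{R}\le 1/2$, that
\[
\operatorname{gap}(P_{\mathrm{B}})\ge \frac{1}{2k}.
\]
Equivalently, for every $f$,
\[
\Var{\mu}{f}\le 2k\,\mathcal{E}_{\mathrm{B}}(f)=2\sum_{i=1}^k \Ex_\mu\bigl[\Var{}{f\mid X_{V\setminus C_i}}\bigr].
\]
The Glauber Dirichlet form is $\mathcal{E}_{\mathrm{G}}(f)=\frac1n\sum_{v\in V}\Ex_\mu[\Var{}{f\mid X_{V\setminus v}}]$. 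The target $\operatorname{gap}(P)\ge 1/(2n)$ therefore follows from
\[
\Var{\mu}{f}\le 2\sum_{v\in V}\Ex_\mu[\Var{}{f\mid X_{V\setminus v}}],
\]
so it suffices to show the following for each block $C_i$:
\[
\Ex_\mu[\Var{}{f\mid X_{V\setminus C_i}}]\le \sum_{v\in C_i}\Ex_\mu[\Var{}{f\mid X_{V\setminus v}}].
\]

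The key observation is that each colour class $C_i$ is an independent set in $G$, because a star colouring is proper. Consequently, once $X_{V\setminus C_i}$ is fixed, the conditional distribution of $X_{C_i}$ is a product measure over $v\in C_i$. Each $X_v$ is forced to $0$ if some neighbour is occupied, and otherwise is Bernoulli$(\lambda/(1+\lambda))$. For a product measure, the Efron–Stein inequality gives
\[
\Var{}{f\mid X_{V\setminus C_i}}\le \sum_{v\in C_i}\Ex\bigl[\Var{}{f\mid X_{V\setminus v}}\bigm| X_{V\setminus C_i}\bigr].
\]
This holds because conditioning on $X_{V\setminus v}$ within the block's product measure is the same as conditioning on the other coordinates of $C_i$ together with the outside. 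Taking expectations and summing over $i$ yields the per-block bound above.

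The main point needing care is the bookkeeping of normalisations. The block gap bound $1-\lambda_2(P_{\mathrm{B}})\ge(1-\norm{2}{R})/k\ge 1/(2k)$ must be translated into the variance inequality. We also need $\mathcal{E}_{\mathrm{B}}(f)=\frac1k\sum_i\Ex[\Var{}{f\mid X_{V\setminus C_i}}]$, which holds because each $P_{C_i}$ is an orthogonal projection, so $\langle f,(I-P_{C_i})f\rangle_\mu=\Ex[\Var{}{f\mid X_{V\setminus C_i}}]$; the Glauber Dirichlet form is handled the same way. Chaining these gives $\Var{\mu}{f}\le 2n\,\mathcal{E}_{\mathrm{G}}(f)$, so the factor $k$ cancels and we obtain $\operatorname{gap}(P)\ge 1/(2n)$. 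This uses that $P$ is reversible and irreducible, so the gap equals the Poincaré constant.
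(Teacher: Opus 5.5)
Your proposal is correct and follows essentially the same route as the paper. Both take the block-dynamics gap $1/(2k)$ from \Cref{lem:dobrushin}, use that each colour class is an independent set so that the conditional law on a block is a product measure, apply tensorisation of variance (Efron--Stein) to get $\inner{\mu}{f,(I-P_{\mathrm{block}})f}\le \frac{n}{k}\inner{\mu}{f,(I-P)f}$, and then let the factor $k$ cancel to obtain $\operatorname{gap}(P)\ge 1/(2n)$.
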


\begin{proof}
    By \Cref{prop:star-colouring}, let $C_1,\ldots,C_k$ be the colour classes of a star colouring of $G$, where $k=20$.
Let $P_{C_j}$ be the averaging operator on the block $C_j$, and let
$P_{\mathrm{block}}=\frac{1}{k}\sum_{j=1}^k P_{C_j}$
be the transition operator of the corresponding block dynamics.
Both $P$ and $P_{\mathrm{block}}$ are reversible with respect to the hard-core distribution $\mu$.
For any reversible transition operator $Q$ with stationary distribution $\mu$, the variational characterisation of the spectral gap is
\begin{align}\label{eq:gap-variational-characterisation}
    \operatorname{gap}(Q)
    =\inf_{f:\,\Var{\mu}{f}>0}
    \frac{\inner{\mu}{f,(I-Q)f}}{\Var{\mu}{f}}.
\end{align}
See \cite{LP17}.
As shown in the proof of~\Cref{thm:HC}, an application of~\Cref{lem:dobrushin} gives
\begin{align}\label{eq:block-gap-for-comparison}
    \operatorname{gap}(P_{\mathrm{block}})\geq\frac{1}{2k}.
\end{align}

For $v\in V$, let $P_v$ be the averaging operator on the spin at $v$.
Then $P=\frac{1}{n}\sum_{v\in V}P_v$.
Since each $C_j$ is an independent set, conditioned on the spins outside $C_j$, the spins in $C_j$ are independent.
The tensorisation of variance for this conditional product distribution gives, for every $f\in L^2(\mu)$,
\begin{align*}
    \inner{\mu}{f,(I-P_{C_j})f}
    \leq\sum_{v\in C_j}\inner{\mu}{f,(I-P_v)f}.
\end{align*}
Indeed, the left-hand side is the expected conditional variance of $f$ given the spins outside $C_j$, and each summand on the right-hand side is the expected conditional variance obtained by resampling the spin at $v$ while fixing all other spins.
Summing over the colour classes yields
\begin{align*}
    \inner{\mu}{f,(I-P_{\mathrm{block}})f}
    &\leq\frac{1}{k}\sum_{v\in V}\inner{\mu}{f,(I-P_v)f}=\frac{n}{k}\inner{\mu}{f,(I-P)f}.
\end{align*}
Therefore, for every nonconstant $f$,
\begin{align*}
    \frac{\inner{\mu}{f,(I-P)f}}{\Var{\mu}{f}}
    &\geq\frac{k}{n}\cdot
    \frac{\inner{\mu}{f,(I-P_{\mathrm{block}})f}}{\Var{\mu}{f}}\geq\frac{k}{n}\operatorname{gap}(P_{\mathrm{block}})\geq\frac{1}{2n},
\end{align*}
where the last inequality follows from~\eqref{eq:block-gap-for-comparison}.
Taking the infimum over $f$ with positive variance and applying~\eqref{eq:gap-variational-characterisation} proves the claim.
\end{proof}

\subsection{Generalisation to anti-ferromagnetic 2-spin systems}  

Now we can generalise \Cref{thm:HC} to anti-ferromagnetic 2-spin systems, namely, we prove Item \eqref{item:2-spin-easy} of \Cref{thm:2-spin}.

\begin{proof}[Proof of Item~\eqref{item:2-spin-easy} of~\Cref{thm:2-spin}]
    If $\beta\gamma>1$, the claim follows from the FPRAS for ferromagnetic two-spin systems in~\cite{GJP03}.
    If $\beta\gamma=1$, the partition function factorises and can be computed exactly.
    It remains to consider $\beta\gamma<1$.
    Since we are outside the case in Item~\eqref{item:2-spin-hard}, we have $\beta\ge 1$, and consequently $\gamma<1$.

    Let $C_1,\ldots,C_k$ be the colour classes of a star colouring given by~\Cref{prop:star-colouring}, where $k\le 20$.
    We run the uniform block dynamics whose blocks are these colour classes.
    Fix two blocks $C_i$ and $C_j$ and an arbitrary feasible pinning of all other blocks.
    For an unpinned vertex $v$, let $a_v$ and $b_v$ be the numbers of its pinned neighbours assigned spins $0$ and $1$, respectively.
    After constant factors are removed, the pinning replaces the activity at $v$ by
    \begin{align*}
        \lambda_v=\lambda\beta^{-a_v}\gamma^{b_v}\le\lambda.
    \end{align*}
    When $\gamma=0$ and $b_v>0$, the vertex $v$ is forced to spin $0$ and may be deleted.
    Thus, the conditional distribution on $C_i\cup C_j$ is a two-spin system on a disjoint union of stars, with possibly nonuniform activities satisfying $0<\lambda_v\le\lambda$.

    Consider one such star with center spin $X$ and leaf-spin vector $Y$.
    Let $Q(Y):=\Pr(X=1\mid Y)$.
    For a leaf configuration $y$, the ratio of the two conditional weights of the center is
    \begin{align*}
        \frac{Q(y)}{1-Q(y)}
        =\lambda_u\prod_{v:y_v=0}\frac{1}{\beta}
        \prod_{v:y_v=1}\gamma
        \le \lambda,
    \end{align*}
    where $u$ is the center.
    Hence, with $a:=\lambda/(1+\lambda)$, we have $0\le Q(Y)\le a$.

    Let $p:=\Pr(X=1)=\Ex[Q(Y)]$.
    Since $X$ is binary, every nonconstant centered function of $X$ is a scalar multiple of $X-p$.
    We may therefore fix the function of $X$ to be $X-p$ when computing the HGR maximal correlation.
    For every function $g$ satisfying $\Ex[g(Y)]=0$, the tower property gives
    \begin{align*}
        \Ex[(X-p)g(Y)]
        &=\Ex\!\left[\Ex[X-p\mid Y]g(Y)\right]\\
        &=\Ex[(Q(Y)-p)g(Y)].
    \end{align*}
    Hence, Cauchy--Schwarz implies
    \begin{align*}
        \mathrm{Corr}(X,g(Y))^2
        &=\frac{\Ex[(Q(Y)-p)g(Y)]^2}{p(1-p)\Var{}{g(Y)}}\le \frac{\Var{}{Q(Y)}}{p(1-p)}.
    \end{align*}
    If $Q(Y)$ is nonconstant, equality is attained by taking $g(Y)=Q(Y)-p$.
    If $Q(Y)$ is constant, both sides are zero.
    Taking the maximum over $g$ therefore gives
    \begin{align*}
        \rho_{\mathrm{HGR}}(X,Y)^2
         =\frac{\Var{}{Q(Y)}}{\Var{}{X}}     =\frac{\Var{}{Q(Y)}}{p(1-p)}.
    \end{align*}
    The bound $Q(Y)\le a$ implies
    \begin{align*}
        \Var{}{Q(Y)}
        \le a\Ex[Q(Y)]-p^2
        =p(a-p)
        \le ap(1-p).
    \end{align*}
    Consequently,
    \begin{align}\label{eq:two-spin-star-correlation}
        \rho_{\mathrm{HGR}}(X,Y)\le \sqrt{\frac{\lambda}{1+\lambda}}.
    \end{align}

    From here, the proof is the same as in \Cref{thm:HC}.
    We can choose $\lambda_0=\frac{1}{4(20-1)^2-1}=\frac{1}{1443}$, 
    and apply \Cref{lem:dobrushin} to show that the spectral gap of the block dynamics is at least $1/(2k)$.

    Each block update can be sampled in linear time because every colour class is an independent set and its spins are conditionally independent.
    Moreover, for fixed $\beta$, $\gamma$, and $\lambda$, $\log\pimin^{-1}=O(n)$ on planar graphs.
    Thus,~\eqref{eq:spectral-mixing-bound} gives a polynomial-time approximate sampler.
    The same argument applies after any feasible vertex pinning.
    The standard sampling-to-counting reduction~\cite{JVV86} then gives an FPRAS for the partition function and completes the proof.
\end{proof}

\section{NP-hardness of planar colouring}

In this section we prove \Cref{thm:colouring}.
The key is to find the right \NP{}-hard problem to reduce from.
Our starting point is a problem, called \emph{graph coalition structure generation} (GCSG), that is shown to be \NP{}-hard by Voice, Polukarov, and Jennings~\cite{VPJ12} in planar graphs (see also \cite[Theorem~5 and Appendix~I]{BKKZ13} for an alternative and shorter proof).  
A \emph{coalition structure} is a partition of the vertex set into disjoint coalitions,
but the number of coalitions is not fixed.
We consider the following variant, where edges are assigned signed but unit weights.

\prob{GCSG}
{A signed planar graph $G=(V,E,w)$, where $w\colon E\to\{-1,+1\}$.}
{The maximum score over all coalition structures $\Pi$, where
\begin{align*}
    \mathrm{score}_{G}(\Pi)
    =
    \sum_{C\in\Pi}\;
    \sum_{\substack{uv\in E,\; u,v\in C}}
    w(uv).
\end{align*}}

\begin{proposition}
\label{prop:hardness-WGG}
    GCSG is \NP{}-hard on planar graphs.
\end{proposition}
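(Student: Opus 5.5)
We will reduce from maximum independent set in planar graphs, which is \NP{}-hard even for planar graphs of maximum degree $3$ \cite{GJS76}. The idea is to attach to every vertex a small signed planar gadget whose best internal coalition structure is worth one unit more when that vertex is ``selected''. Positive edges between adjacent gadgets then make it costly to select two adjacent vertices.

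Given a planar graph $H=(V_H,E_H)$, we build a signed planar graph $G$ as follows. Take a copy of $H$ and give every edge of $E_H$ weight $-1$. For each vertex $v\in V_H$, attach a pendant path $v\,a_v\,b_v$ in which the edge $va_v$ has weight $+1$ and the edge $a_vb_v$ has weight $+1$. This keeps $G$ planar. To relate the coalition structures of $G$ to independent sets of $H$, we examine the local choices:
\begin{itemize}[label={}]
  \item \emph{Negative edges.} Since every edge of $H$ is negative, an optimal structure never needs to place two $H$-adjacent vertices in the same coalition.
  \item \emph{Pendant paths.} Putting $v,a_v,b_v$ together gains $2$.
\end{itemize}
This naive gadget gives the trivial optimum of $2|V_H|$, so the construction must be refined so that the \emph{positive} interactions are what create conflicts.

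The refined construction uses a positive edge for each $uv\in E_H$. We subdivide each edge $uv$ into $u\,x_{uv}\,v$ with both halves of weight $+1$. Because coalitions are sets, $x_{uv}$ can join at most one of the coalitions of $u$ and $v$ if these differ, gaining $1$; it gains $2$ only if $u$ and $v$ share a coalition. To penalise merging $u$ and $v$, we also add a direct edge $uv$ of weight $-1$, which together with the subdivision forms a planar triangle-like face. Merging then gives a net contribution of $2-1=1$, while separating gives $1$ as well. We need an asymmetry, so we instead use weight $-2$, realised with unit weights by two parallel paths of negative structure, for example two extra degree-two vertices each carrying a $+1$ and a $-1$ edge, arranged so the effective weight is $-2$.

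The intended correspondence is the following. Each vertex $v$ is either ``in'', meaning it is placed in a coalition with its private positive gadget, worth $1$ extra, or it joins a coalition with neighbour gadgets. We then aim to show
\[
  \mathrm{opt}(G)=c(H)+\alpha(H)
\]
for an explicitly computable $c(H)$, so that computing the GCSG optimum yields $\alpha(H)$. The proof has two directions:
\begin{enumerate}
  \item \emph{Lower bound.} From an independent set $I$, construct a coalition structure achieving $c(H)+|I|$.
  \item \emph{Upper bound.} Take any coalition structure and, through local exchange moves that do not decrease the score, put it into a normal form in which every coalition is a single vertex-gadget or a gadget plus some edge-gadgets. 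Then read off an independent set of the required size.
\end{enumerate}

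The main obstacle is choosing gadget weights so that the normal-form exchange argument actually works with only $\pm1$ weights; large coalitions spanning many vertices could otherwise exploit many positive edges at once. If a direct gadget design proves unwieldy, the fallback is to follow the construction of \cite{VPJ12} (or the shorter one in \cite[Theorem~5 and Appendix~I]{BKKZ13}), which reduces from planar independent set with integer weights. We would then simulate each weight $w\in\mathbb{Z}$ by $|w|$ internally disjoint length-two paths of unit positive weight, or of mixed sign for negative weights, which preserves planarity and, by an exchange argument on the degree-two path vertices, preserves optimal scores up to an explicit additive constant.
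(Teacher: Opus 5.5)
Your primary route, the direct gadget reduction from \textsc{Planar-MIS}, has a genuine gap. In your refined gadget, each edge $uv\in E_H$ becomes the $(+1,+1)$ path $u\,x_{uv}\,v$ together with two $(+1,-1)$ paths. Optimising over the internal vertices, these contribute $1+\mathbf 1\{u,v\text{ together}\}$ and $2\bigl(1-\mathbf 1\{u,v\text{ together}\}\bigr)$ respectively. The net effect is therefore just a unit \emph{negative} edge between $u$ and $v$ plus a constant. The pendant path $v\,a_v\,b_v$ can always join $v$, whatever else is in $v$'s coalition. So every local preference is satisfied at once by putting each $v\in V_H$ in its own coalition with $a_v,b_v$. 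The optimum is then $2\abs{V_H}+3\abs{E_H}$, independent of $\alpha(H)$, and the identity $\mathrm{opt}(G)=c(H)+\alpha(H)$ cannot hold for this construction. To encode independent sets you need frustration: contested vertices that two adjacent ``selected'' vertices both want but which can lie in only one coalition, plus an all-or-nothing reward for selecting a vertex. This is exactly what your sketch leaves open, and the normal-form exchange argument is never specified.

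Your fallback, by contrast, is exactly the paper's proof, and it is what you should present. To make it complete, note three points.
\begin{enumerate}
  \item The hard instances of \cite{VPJ12} have integer weights of \emph{polynomial} magnitude. You do not mention this, but it is what makes replacing $uv$ by $\abs{w(uv)}$ paths a polynomial-size reduction.
  \item Zero-weight edges are deleted first.
  \item Fix a partition $\Pi$ of $V$. The internal path vertices are pairwise non-adjacent, so each can be placed independently: with $u$'s coalition, with $v$'s coalition, or elsewhere for $0$. This gives contributions $1+\mathbf 1\{u,v\text{ together}\}$ for $(+1,+1)$ paths and $1-\mathbf 1\{u,v\text{ together}\}$ for $(+1,-1)$ paths, hence
\[
\max_{\widetilde\Pi:\,\widetilde\Pi|_V=\Pi}\mathrm{score}_{\widetilde G}(\widetilde\Pi)=\mathrm{score}_G(\Pi)+\sum_{e\in E}\abs{w(e)}.
\]
Planarity is kept by drawing the paths in a thin neighbourhood of each original edge.
\end{enumerate}
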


\begin{proof}
Voice, Polukarov, and Jennings~\cite{VPJ12} showed that GCSG is \NP{}-hard on planar graphs with integer edge weights of polynomial magnitude.
We reduce their hard instances to the signed unit-weight version.
Zero-weight edges may first be deleted.

Let $G=(V,E,w)$ be one of their instances.
For every edge $uv\in E$, replace $uv$ by $\abs{w(uv)}$ internally vertex-disjoint paths $uxv$ of length two.
If $w(uv)>0$, give both edges of every such path weight $+1$.
If $w(uv)<0$, give $ux$ weight $+1$ and $xv$ weight $-1$.
Denote the resulting signed unit-weight graph by $\widetilde G$.

Fix a partition $\Pi$ of the original vertex set $V$.
For a path with weights $(+1,+1)$, the maximum contribution over the choice of the coalition containing $x$ is $2$ when $u$ and $v$ are together and $1$ when they are separate.
Thus, this contribution is $1+\mathbf 1\{u,v\text{ are together}\}$.
For a path with weights $(+1,-1)$, the corresponding maximum contributions are $0$ and $1$, respectively.
Thus, this contribution is $1-\mathbf 1\{u,v\text{ are together}\}$.
The internal vertices of different paths are distinct, so their coalitions can be chosen independently.
Consequently,
\begin{align}\label{eq:unit-weight-simulation}
    \max_{\substack{\widetilde\Pi:\;\widetilde\Pi|_V=\Pi}}
    \mathrm{score}_{\widetilde G}(\widetilde\Pi)
    =\mathrm{score}_{G}(\Pi)+\sum_{e\in E}\abs{w(e)}.
\end{align}
Taking the maximum over $\Pi$ shows that the two optimum values differ by the known additive constant $\sum_{e\in E}\abs{w(e)}$.

The replacement preserves planarity because all paths replacing an edge can be drawn inside a thin neighbourhood of that edge.
It has polynomial size because the weights in the hard instances of~\cite{VPJ12} have polynomial magnitude.
This proves the proposition.
\end{proof}

Let $q\ge 4$ be fixed.
Consider the following restriction of GCSG where the number of coalitions is bounded by $q$.

\prob{\textsc{$q$-cut}}
{A weighted planar graph $G=(V,E,w)$, where $w\colon E\to\{-1,+1\}$.}
{The maximum score over all maps $\sigma\colon V\to[q]$, where
\begin{align*}
    \mathrm{score}_{G}(\sigma)
    =
    \sum_{\substack{uv\in E,\;\sigma(u)=\sigma(v)}} w(uv).
\end{align*}}

\begin{lemma}\label{lem:q-cut-hardness}
    For every fixed integer $q\geq 4$, computing \textsc{$q$-cut} is \NP{}-hard.
\end{lemma}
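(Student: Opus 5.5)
The plan is to show that \textsc{$q$-cut} and GCSG have exactly the same optimum value on every signed planar graph whenever $q\ge 4$. The identity map on instances is then a polynomial-time reduction from GCSG, and \Cref{prop:hardness-WGG} gives the claim. The key tool is the four colour theorem \cite{RSST97}, applied to a planar graph obtained by contracting coalitions.

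The first step is the easy inequality. Any map $\sigma\colon V\to[q]$ induces a coalition structure: its nonempty fibres $\sigma^{-1}(c)$. An edge is counted in $\mathrm{score}_G(\sigma)$ exactly when its endpoints lie in the same fibre, so the two scores coincide. Hence the optimum of \textsc{$q$-cut} is at most the optimum of GCSG.

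For the reverse inequality, fix an optimal coalition structure $\Pi$ and normalise it in two moves.
\begin{enumerate}
  \item Split every coalition $C$ into the vertex sets of the connected components of $G[C]$. No edge joins two different components, so every edge counted before is still counted and none is added. The score is unchanged, and every coalition now induces a connected subgraph.
  \item Contract each coalition to a single vertex, deleting loops and merging parallel edges. Since the coalitions are connected, the resulting quotient graph $H$ is a minor of $G$ and hence planar.
\end{enumerate}
By the four colour theorem, $H$ has a proper colouring $c$ with $4\le q$ colours. Define $\sigma(v)$ to be the colour of the coalition containing $v$. Two coalitions that receive the same colour are non-adjacent in $H$, so no edge of $G$ runs between them. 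Therefore the edges whose endpoints share a $\sigma$-colour are exactly the edges internal to a coalition of $\Pi$. This gives $\mathrm{score}_G(\sigma)=\mathrm{score}_G(\Pi)$, and the two optima are equal.

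Note that the signs of the edge weights play no role in any of these steps. The only place needing care is the planarity of the quotient graph: it relies on first making every coalition connected, which is the reason for the splitting step, since contracting a disconnected set need not preserve planarity. Given \Cref{prop:hardness-WGG}, I do not expect a genuine obstacle beyond this.
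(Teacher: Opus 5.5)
Your proposal is correct and follows the paper's proof essentially step for step: you split coalitions into connected components, contract them to get a planar quotient, four-colour it by \cite{RSST97}, lift the colouring back, and use the trivial converse that colour classes of any map form a coalition structure. Your remark that connectedness is exactly what keeps the quotient planar, since it is then a minor of $G$, usefully spells out the step the paper asserts without comment.
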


\begin{proof}
Let $\Pi=(C_1,C_2,\ldots,C_k)$ be an optimal coalition structure for an instance of GCSG.
We may assume that every subgraph $G[C_i]$ is connected.
Indeed, replacing a disconnected coalition by the vertex sets of its connected components does not change the score.

Contract every $C_i$ to one vertex and discard loops.
The resulting graph is planar, so its underlying simple graph has a proper four-colouring~\cite{RSST97}.
Give all vertices in $C_i$ the colour assigned to its contracted vertex.
An edge is monochromatic in this four-colouring exactly when its endpoints belong to the same coalition of $\Pi$.
The four-colouring therefore has the same score as $\Pi$.
It is also a map to $[q]$ because $q\ge4$.

Conversely, every map $V\to[q]$ defines a coalition structure through its colour classes.
The optimum values of GCSG and \textsc{$q$-cut} are therefore equal.
The claim follows from~\Cref{prop:hardness-WGG}.
\end{proof}

\subsection{Reduction to counting colourings}

To reduce from \textsc{$q$-cut} to counting colourings, we just need to amplify the number of colourings coming from the optimal coalition structure, so that the optimal contributes a significant fraction of all proper colourings.

\begin{proof}[Proof of~\Cref{thm:colouring}]

For any signed unit-weight instance $G=(V,E,w)$ of \textsc{$q$-cut},
let $n=\abs{V}$, and let $m_+$ and $m_-$ be the numbers of positive and negative edges, respectively.
For positive integers $s$ and $t$ to be chosen below, construct an unweighted graph $H=(V',E')$ as follows.
Replace every positive edge by $s$ internally vertex-disjoint paths of length two.
Replace every negative edge by $t$ internally vertex-disjoint paths of length three.
All internal vertices introduced for different paths are distinct.
The construction preserves planarity.

For any colouring $\sigma\in[q]^{V}$, the number of proper colourings $\tau\in[q]^{V'}$ satisfying
$\tau_{V}=\sigma$ is $w(\sigma):=\prod_{x\in\{+,-\},\,y\in\{\mathrm{same},\mathrm{diff}\}}c_{x,y}^{m_{x,y}}$, where $m_{x,y}$ denotes the number of edges of sign $x$
whose endpoints receive the same or different colours in $\sigma$, according
to whether $y=\mathrm{same}$ or $y=\mathrm{diff}$. The coefficients
$c_{x,y}$ are given by
\begin{align*}
    c_{+,\mathrm{same}}
    &\defeq(q-1)^s,
    &
    c_{+,\mathrm{diff}}
    &\defeq(q-2)^s,
    \\
    c_{-,\mathrm{same}}
    &\defeq\tp{(q-1)(q-2)}^t,
    &
    c_{-,\mathrm{diff}}
    &\defeq\tp{(q-1)+(q-2)^2}^t.
\end{align*}

Choose integers $s$ and $t$ such that
\begin{enumerate}
    \item $s=\Theta(n)$ and $t=\Theta(n)$;
    \item $\tp{\frac{q-1}{q-2}}^s\geq 100q^{100n}$;
    \item $2\tp{\frac{q-1}{q-2}}^s\geq\tp{\frac{(q-1)+(q-2)^2}{(q-1)(q-2)}}^t \geq \tp{\frac{q-1}{q-2}}^s$.
\end{enumerate}
The existence of $s$ and $t$ is due to $1<\frac{(q-1)+(q-2)^2}{(q-1)(q-2)}<\frac{q-1}{q-2}<2$.

Let $K:=\max_{\sigma\in[q]^{V}}
\mathrm{score}_{G}(\sigma)$, and define
$Z_0:=(q-2)^{sm_+}\tp{(q-1)(q-2)}^{tm_-}$.
Set
\begin{align*}
    A:=\frac{q-1}{q-2},
    \qquad
    B:=\frac{(q-1)+(q-2)^2}{(q-1)(q-2)},
    \qquad
    \theta:=\frac{B^t}{A^s}.
\end{align*}
Our choice of $s$ and $t$ ensures that $1\leq\theta\leq2$.
For $\sigma\in[q]^V$, let $p(\sigma)$ and $r(\sigma)$ be the numbers of monochromatic positive and negative edges, respectively.
Then $\mathrm{score}_G(\sigma)=p(\sigma)-r(\sigma)$, and the number of extensions of $\sigma$ to a proper colouring of $H$ is
\begin{align*}
    w(\sigma)
    &=Z_0 A^{s\cdot p(\sigma)}B^{t(m_--r(\sigma))}\\
    &=Z_0 A^{s(\mathrm{score}_G(\sigma)+m_-)}
      \theta^{m_--r(\sigma)}.
\end{align*}
An optimal $\sigma$ contributes at least $Z_0A^{s(K+m_-)}$ because $\theta\geq1$.
On the other hand, every $\sigma$ contributes at most $Z_0A^{s(K+m_-)}2^{m_-}$ because $\mathrm{score}_G(\sigma)\leq K$, $\theta\leq2$, and $0\leq m_--r(\sigma)\leq m_-$.
Summing over the $q^n$ choices of $\sigma$ shows that the total number $Z$ of proper colourings of $H$ satisfies
\begin{align*}
    Z_0 A^{(K + m_-)s}
    \leq Z
    \leq Z_0 A^{(K + m_-)s}q^n2^{m_-}.
\end{align*}

Suppose that an FPRAS for counting proper $q$-colourings exists. Run it with
a fixed relative-error parameter, say, $\eps=1/2$. With probability
at least $3/4$, its output $\widehat{Z}$ satisfies
\begin{align*}
    \underbrace{\frac{1}{2} Z_0 A^{(K + m_-)s}}_{:=L_K}
    \leq \widehat{Z}
    \leq \underbrace{2Z_0A^{(K + m_-)s}q^n2^{m_-}}_{:=R_K}.
\end{align*}
Since $G$ is planar, $m_-=O(n)$, and hence the second condition in the choice of $s$ gives
\begin{align*}
    \frac{R_K}{L_{K+1}}
    =\frac{4q^n2^{m_-}}{A^s}
    <1.
\end{align*}
Thus, $R_K<L_{K+1}$.
We can therefore obtain the exact value of $K$ by calculating the largest integer $x$ such that $2\widehat{Z}\geq Z_0A^{(x+m_-)s}$.
This completes the proof.
\end{proof}

\section{Two-spin system hardness}

In this section we prove Item~\eqref{item:2-spin-hard} of \Cref{thm:2-spin}, which states that computing the partition function of anti-ferromagnetic two-spin systems on planar graphs is \NP{}-hard.
Let $0 \le \beta,\gamma < 1$ be edge activities with $\max \set{\beta,\gamma}> 0$, and let $\lambda > 0$ be an external field. We assume $\beta \neq \gamma$ or $\lambda \neq 1$.
The starting point of our reduction is the classic \NP{}-hard problem~\cite{GJS76} of finding the maximum independent set in planar graphs.   
\prob{Planar-MIS}{
    A planar graph $G=(V,E)$;
}{
    The size of the maximum independent set in $G$.
}

\begin{proposition}[\cite{GJS76}]
    The problem \textsc{Planar-MIS} is \NP{}-hard.
\end{proposition}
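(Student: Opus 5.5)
The statement immediately above is the classical result of Garey, Johnson and Stockmeyer~\cite{GJS76}, so strictly speaking we only need to cite it. If we were to reprove it, the plan is a chain of polynomial reductions: first show that vertex cover (equivalently MIS) is \NP{}-hard on planar graphs of maximum degree~$3$, starting from planar 3-SAT or from planar 3-colourability.

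The concrete route we would take has three steps. First, we reduce \textsc{3-SAT} to \textsc{Planar 3-SAT}. The standard reduction builds the variable--clause incidence graph, draws it in the plane, and replaces every edge crossing by a constant-size planar crossover gadget that preserves satisfiability. Second, we reduce \textsc{Planar 3-SAT} to \textsc{Planar-MIS} with the usual construction:
\begin{itemize}
  \item each variable becomes an even cycle whose two alternating independent sets encode true and false;
  \item each clause becomes a triangle;
  \item each triangle vertex is joined to the appropriate vertex of the corresponding variable cycle, with the ``literal port'' chosen on the side of the cycle facing the clause so that planarity is kept.
\end{itemize}
Third, we show that the formula is satisfiable if and only if the resulting graph has an independent set of size $\sum_x \ell_x/2 + m$, where $\ell_x$ is the length of the cycle for $x$ and $m$ is the number of clauses.

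The main obstacle is routing the connections planarly while keeping each variable cycle consistent. This is handled by making the cycles long enough, with length proportional to the number of occurrences of $x$, so that each occurrence has its own port on the correct parity class. For the counting argument, any independent set takes at most half of each cycle and at most one vertex per triangle. Reaching the target size therefore forces an alternating choice on every cycle, which defines a truth assignment, and forces one vertex per triangle, which certifies that every clause is satisfied.
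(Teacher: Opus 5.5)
Your proposal is correct and matches the paper, which likewise gives no proof of this proposition and simply cites \cite{GJS76}. Your optional reproof sketch (Planar 3-SAT via crossover gadgets, then even variable cycles with ports on the correct parity class and clause triangles, with target $\sum_x \ell_x/2+m$) is a standard valid alternative. It goes through Planar 3-SAT, a later route than the original argument of \cite{GJS76}, and is not needed here.
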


We first introduce two rooted planar gadgets $U_0$ and $U_1$ that bias their roots toward spins $0$ and $1$, respectively, relative to the external field.

\begin{lemma}\label{lem:construct-gadget}
There exist constant-size rooted planar graphs $(U_0,r_0)$ and $(U_1,r_1)$, each admitting a plane embedding in which its root lies on the outer face, such that
\begin{align*}
    Z^{r_0 \gets 0}_{U_0}>Z^{r_0 \gets 1}_{U_0}
    \qquad\text{and}\qquad
    Z^{r_1 \gets 0}_{U_1}<Z^{r_1 \gets 1}_{U_1},
\end{align*}
where $Z^{r \gets a}_{U_b}$ denotes the partition function on $U_b$ given $r$ is pinned with $a\in\{0,1\}$ with the activity of the root omitted.
\end{lemma}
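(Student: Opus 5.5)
We will build both gadgets from stars hanging off the root. Recall the setting: $0\le\beta,\gamma<1$, $\max\{\beta,\gamma\}>0$, and $\lambda>0$. For a single pendant neighbour $w$ of the root $r$, the pinned root weights are
\begin{align*}
  Z^{r\gets 0}=\beta+\lambda,\qquad Z^{r\gets 1}=1+\lambda\gamma.
\end{align*}
Attaching $d$ such pendant leaves multiplies these weights, so the ratio is $\bigl((\beta+\lambda)/(1+\lambda\gamma)\bigr)^d$. Since $(\beta+\lambda)(1+\lambda\gamma)^{-1}$ is a fixed positive number, it is either $>1$, $<1$, or $=1$. In the first case the star with many leaves serves as $U_0$; in the second case it serves as $U_1$. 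Trees are planar with every vertex on the outer face, so the embedding requirement is automatic.

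It remains to produce the opposite bias, and this is the main step. The plan is to use a path of length two, $r - x - w$, i.e.\ a ``leaf of a leaf''. Writing $a:=\beta+\lambda$ and $b:=1+\lambda\gamma$ for the pendant weights seen at $x$, the root weights become
\begin{align*}
  Z^{r\gets 0}&=\beta a+\lambda\,b, & Z^{r\gets 1}&=a+\lambda\gamma\, b.
\end{align*}
More usefully, we attach many leaves to $x$ so that $x$ is effectively pinned. If $a>b$, then $x$ is driven towards $0$, and the root then sees an effective ratio $\approx\beta/1<1$, which is biased towards $1$. If $a<b$, then $x$ is driven towards $1$, and the root sees a ratio $\approx 1/\gamma>1$, which is biased towards $0$. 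Taking $k$ leaves on $x$ and letting $k$ grow makes the approximation exact in the limit, so a constant $k$ suffices because the inequalities are strict. (When $\gamma=0$, $x\gets 1$ forces $r\gets 0$, so the ratio is infinite, which is fine.) Thus in the case $a\ne b$, the star gives one gadget and the ``star of stars'' gives the other.

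The degenerate case $\beta+\lambda=1+\lambda\gamma$ needs separate treatment. Here pendant leaves are neutral, and I would switch to a triangle or a small cycle. First try attaching two adjacent vertices $x,y$ to $r$, forming a triangle, and compute the ratio
\begin{align*}
  \frac{\beta^2+2\beta\lambda+\lambda^2\gamma}{1+2\lambda\gamma+\lambda^2\gamma^3}.
\end{align*}
Check that, under the constraint $\beta+\lambda=1+\lambda\gamma$ with $\beta,\gamma<1$, this ratio differs from $1$ (otherwise $\beta=\gamma$ and $\lambda=1$, which is excluded). Once a gadget with ratio $\ne 1$ exists, it plays the role of the pendant leaf, and the star and star-of-stars constructions above give both biases: hanging many copies of a biased gadget at a vertex pins it to the favoured spin, and the neighbour then sees $\beta$ or $1/\gamma$ as before.

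The expected obstacle is exactly this algebraic verification in the degenerate case, i.e.\ showing that some small planar gadget breaks the tie unless $(\beta=\gamma,\lambda=1)$. All constructions keep the root on the outer face, since they are trees or triangles glued at the root.
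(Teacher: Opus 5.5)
Your architecture matches the paper's. You take a base gadget $W$ (a pendant edge, or a triangle when the pendant edge is neutral) whose two pinned weights differ, and use $W$ itself for one bias. You flip the bias by gluing $t$ copies of $W$ at a vertex $u$ and hanging $u$ off a new root. Your limit analysis of the ``star of stars'' is exactly the paper's sign computation of $(\beta-1)\bigl(Z^{r\gets0}_W/Z^{r\gets1}_W\bigr)^t+\lambda(1-\gamma)$, and that part is fine.

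The gap is in the only step that uses the hypothesis that $(\beta,\gamma,\lambda)$ is not of the form $\beta=\gamma$, $\lambda=1$. You assert, without verifying, that the triangle breaks the tie whenever $\beta+\lambda=1+\lambda\gamma$. Moreover, the triangle weights you wrote are wrong, so the check cannot be carried out from them.

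When $x=y=0$, the edge $xy$ contributes a factor $\beta$, which you dropped in both expressions. The correct pinned weights are $Z^{r\gets0}=\beta^3+2\beta\lambda+\gamma\lambda^2$ and $Z^{r\gets1}=\beta+2\gamma\lambda+\gamma^3\lambda^2$.

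Your expression fails an obvious sanity check. At $\beta=\gamma$, $\lambda=1$, spin-flip symmetry forces every such ratio to equal $1$. Yet your numerator minus denominator is $-(1-\beta)^2(1+\beta)\neq 0$ for $\beta<1$. Substituting $\lambda=(1-\beta)/(1-\gamma)$ into your equation gives $-(1-\beta)^2(1+\gamma)=0$. This would ``prove'' that the triangle always breaks the tie, which is impossibly strong.

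With the correct weights, the tie condition is $(\beta^3-\beta)+2(\beta-\gamma)\lambda+(\gamma-\gamma^3)\lambda^2=0$. Substituting $\lambda(1-\gamma)=1-\beta$ turns the left-hand side into $\frac{1-\beta}{1-\gamma}\bigl(-\beta(1+\beta)(1-\gamma)+2(\beta-\gamma)+\gamma(1+\gamma)(1-\beta)\bigr)$. The bracket equals $(\beta-\gamma)(1-\beta)(1-\gamma)$, so the whole expression is $(\beta-\gamma)(1-\beta)^2$. Since $\beta<1$, this vanishes only if $\beta=\gamma$, and then $\beta+\lambda=1+\lambda\beta$ forces $\lambda=1$, which is excluded. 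Supplying this computation closes the proof.
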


\begin{proof}
    We first construct a rooted planar graph $(W,r)$ with $Z_{W}^{r \gets 0} \neq Z_W^{r \gets 1}$. Consider the following constructions:
    \begin{itemize}
        \item $W=K_2$. In this case, $Z_W^{r \gets 0} = Z_W^{r \gets 1}$ only if
        \begin{align}\label{eq:bad-1}
            1 + \lambda \gamma = \beta + \lambda.
        \end{align}
        \item $W=K_3$. In this case, $Z_W^{r \gets 0} = Z_W^{r \gets 1}$ only if
        \begin{align}\label{eq:bad-2}
            \beta^3 + 2\beta\lambda+\gamma \lambda^2 = \beta + 2\gamma \lambda + \gamma^3 \lambda^2.
        \end{align}
    \end{itemize}
    Suppose both~\eqref{eq:bad-1} and~\eqref{eq:bad-2} hold. Plugging~\eqref{eq:bad-1} into~\eqref{eq:bad-2}, we have
    \begin{align*}
        0 &= (\beta^3-\beta) + \tp{2\beta - 2\gamma} \frac{1-\beta}{1-\gamma} + \tp{\gamma-\gamma^3} \tp{\frac{1-\beta}{1-\gamma}}^2\\
        &= \frac{1-\beta}{1-\gamma} \tp{-\beta(1+\beta)(1-\gamma) +2 (\beta-\gamma) + \gamma(1+\gamma)(1-\beta)}\\
        &= (\beta-\gamma)(1-\beta)^2.
    \end{align*}
    Combining with~\eqref{eq:bad-1}, it implies $\beta =\gamma$ and $\lambda = 1$, which violates our assumption. Thus, at least one of the constructions is a candidate with biased partition function. We construct $\widetilde{W}$ by taking $t$ copies of $W$, identifying all their roots into a single vertex $u$, and then adding a new vertex $\widetilde{r}$ adjacent to $u$. We designate $\widetilde{r}$ as the root of $\widetilde{W}$. 
    The parameter $t$ will be determined later.

    By a straightforward calculation,
    \begin{align}\label{eq:subtraction}
        Z_{\widetilde{W}}^{\widetilde{r} \gets 0} - Z_{\widetilde{W}}^{\widetilde{r} \gets 1} = \tp{Z_W^{r \gets 1}}^t \tp{(\beta-1) \tp{\frac{Z_W^{r \gets 0}}{Z_W^{r \gets 1}}}^t + \lambda (1-\gamma)}.
    \end{align}
    Therefore, for sufficiently large $t$, the quantity $Z_{\widetilde{W}}^{\widetilde{r} \gets 0}-Z_{\widetilde{W}}^{\widetilde{r} \gets 1}$ has the opposite sign to $Z_W^{r \gets 0}-Z_W^{r \gets 1}$. This completes the proof.
\end{proof}

We now present the reduction. Let $G=(V,E)$ be a planar graph. For every edge $uv\in E$, we replace the edge $uv$ by $s$ internally vertex-disjoint paths of length $3$ connecting $u$ and $v$. Afterwards, for every vertex $v\in V$, attach $x_v$ copies of $U_0$ and $y_v$ copies of $U_1$ by identifying the root of each copy with $v$. Let $H=(V',E')$ denote the resulting planar graph. Parameters $s, x_v,y_v$ will be specified in the proof.

\begin{proof}[Proof of Item~\eqref{item:2-spin-hard} of \Cref{thm:2-spin}]
After normalisation, the effective field $\lambda_v$ on $v\in V$ is
\begin{align*}
    \lambda_v \defeq \lambda \tp{\frac{Z^{r_0 \gets 1}_{U_0}}{Z^{r_0 \gets 0}_{U_0}}}^{x_v} \tp{\frac{Z^{r_1 \gets 1}_{U_1}}{Z^{r_1 \gets 0}_{U_1}}}^{y_v}.
\end{align*}
The weight contributed by the bundle replacing an edge $uv$ is $\zeta_{a,b}^s$, where $\zeta_{a,b}$ denotes the partition function of a path of length $3$ given that $u$ and $v$ are assigned spins $a$ and $b$, respectively, with the vertex weights of $u$ and $v$ omitted.
Note that $\zeta_{0,1}=\zeta_{1,0}$.
Effectively, the edge weights are $\beta'=\left(\frac{\zeta_{0,0}}{\zeta_{1,0}}\right)^s$ and $\gamma'=\left(\frac{\zeta_{1,1}}{\zeta_{1,0}}\right)^s$ for a $2$-spin system on $G$. 
Moreover, we can redistribute the $0,0$ edge weight $\beta'$ to the vertices, so that the effective field on $v$ becomes
\begin{align*}
    \lambda_v' \defeq \lambda \tp{\frac{Z^{r_0 \gets 1}_{U_0}}{Z^{r_0 \gets 0}_{U_0}}}^{x_v} \tp{\frac{Z^{r_1 \gets 1}_{U_1}}{Z^{r_1 \gets 0}_{U_1}}}^{y_v} \tp{\frac{\zeta_{1,0}}{\zeta_{0,0}}}^{s d_v},
\end{align*}
where $d_v$ is the degree of $v$ in $G$, and the effective edge weights become $\beta''=1$ and $\gamma''=\left(\frac{\zeta_{1,1}\zeta_{0,0}}{\zeta_{1,0}^2}\right)^s$.
This defines a new $2$-spin system on $G$ with edge weights $\beta''$ and $\gamma''$, and vertex weights $\lambda_v'$ for $v\in V$.
The partition function of the new spin system is equal to the partition function of the original spin system on $H$ up to a multiplicative factor that can be computed in linear time.
We next show that the new partition function is hard to approximate.


By~\Cref{lem:construct-gadget}, we have
\begin{align*}
    \frac{Z^{r_0 \gets 1}_{U_0}}{Z^{r_0 \gets 0}_{U_0}} < 1 
    \quad\text{ and }\quad \frac{Z^{r_1 \gets 1}_{U_1}}{Z^{r_1 \gets 0}_{U_1}} > 1.
\end{align*}
Since the original $2$-spin system is anti-ferromagnetic, $\zeta_{1,1} \zeta_{0,0} < \zeta_{1,0}^2$.
Let $n\defeq \abs{V}$.
Choose $s = \Theta(n^4)$ such that $\gamma''=\tp{\frac{\zeta_{1,1} \zeta_{0,0}}{\zeta^2_{1,0}}}^s \le \exp\tp{-n^4}$ and polynomially bounded parameters $x_v$, $y_v$, such that
\begin{align*}
\exp\tp{n^2} \le \lambda_v' \le U \exp\tp{n^2},
\end{align*}
where $U = \max \set{\frac{Z^{r_0 \gets 0}_{U_0}}{Z^{r_0 \gets 1}_{U_0}}, \frac{Z^{r_1 \gets 1}_{U_1}}{Z^{r_1 \gets 0}_{U_1}}}$. The existence of $s$, $x_v$ and $y_v$ can be verified easily.
Intuitively, the exponentially small $(1,1)$ edge weight ensures that a typical configuration is an independent set, and the exponentially large vertex weights ensure that contributions from the maximum independent sets dominate the partition function.
The reason that we have the $x_v$ part of the gadget is because the vertex weight might be too big to start with,
and for the reduction below we also need an upper bound on $\lambda_v'$.

For a configuration $\sigma\in\{0,1\}^{V}$ on $G$, let $S = \{v\in V \mid \sigma_v=1\}$ and let $w(\sigma)$ be the effective weight of $\sigma$ in the new spin system on $G$. 
If $S$ is not an independent set in $G$, it holds that for sufficiently large $n$,
\begin{align*}
    w(\sigma)\le \gamma''\cdot U^{\abs{S}} \cdot \exp\tp{n^2 \abs{S}} \le 1.
\end{align*}
Otherwise, 
\begin{align*}
\exp\tp{n^2 \abs{S}}\le w(\sigma)\le  U^{\abs{S}} \cdot \exp\tp{n^2 \abs{S}}.
\end{align*}
Let $K$ be the size of a maximum independent set in $G$. 
Therefore, the partition function $Z_{\textrm{new}}=\sum_{\sigma \in \{0,1\}^V} w(\sigma)$ satisfies
\begin{align*}
    \exp\tp{n^2 K}\le Z_{\textrm{new}} \le (2 U)^n \exp\tp{n^2 K} + 2^n.
\end{align*}

Suppose that an FPRAS for estimating the partition function under the given parameters exists. Run it with
a fixed relative-error parameter, for example $\eps=1/2$. With probability
at least $3/4$, its output $\widehat{Z}$ satisfies
\begin{align*}
    \underbrace{\frac{1}{2}\exp\tp{n^2 K}}_{:=L_{K}} \le \widehat{Z} \le \underbrace{(4 U)^n \exp\tp{n^2 K} + 2^{n+1}}_{:=R_{K}}.
\end{align*}
When $n$ is sufficiently large, $L_{K+1} > R_K$ holds and thus the size of a maximum independent set can be obtained by calculating the largest integer $\ell$ with $\frac{1}{2}\exp\tp{n^2 \ell} \le \widehat{Z}$.
Thus, the assumed FPRAS yields a randomised polynomial-time algorithm
that computes the maximum independent-set size exactly with probability
at least $3/4$. This gives an RP
algorithm for \textsc{Planar-MIS} and hence implies $\NP{}=\RP{}$. This completes the proof.
\end{proof}

\section*{Acknowledgement}

HG would like to thanks Mark Jerrum for extensive discussions on the planar hardcore problem.
We also thank Zongchen Chen for pointing out connections between our work and the pairwise influence matrix introduced in~\cite{LO25}.

\bibliographystyle{alpha}
\bibliography{ref}

\end{document}